\documentclass{article}

\usepackage[preprint]{neurips_2026}
\usepackage{amsmath}
\usepackage{graphicx}
\usepackage{amsthm}
\usepackage[utf8]{inputenc} 
\usepackage[T1]{fontenc}    
\usepackage{hyperref}       
\usepackage{url}            
\usepackage{booktabs}       
\usepackage{amsfonts}       
\usepackage{nicefrac}
\usepackage{thmtools}
\usepackage{microtype} 
\usepackage{multirow}
\usepackage{comment}
\usepackage{xcolor}         
\newtheorem{theorem}{Theorem}

\newtheorem{remark}{Remark}
\newtheorem{definition}{Definition}
\newtheorem{corollary}{Corollary}
\usepackage{subcaption}
\newcommand{\R}{\mathbb{R}}
\newcommand{\x}{\mathbf{x}}

\newcommand{\s}{\mathbf{s}}
\newcommand{\opt}{\mathbf{o}}
\newcommand{\sw}{\text{SW}}
\newcommand{\argmax}{\text{argmax}}
\def\RVCG{\mathrm{Rev}_{\text{VCG}}}
\def\Rev{\mathrm{Rev}_{\alpha}}

\title{A Unified Framework for Uniform-Price\\ Resource Allocation Mechanisms}

\author{%
  Ioannis Caragiannis \\
  Aarhus University
  \And Dimitris Fotakis \\
  National Technical University of Athens\\ and Archimedes, Athena RC
  \And Stratis Skoulakis\\
  Aarhus University
}

\begin{document}

\maketitle

\begin{abstract}
Mechanisms for allocating a divisible resource among strategic agents have been widely studied. The prominent paradigm is the proportional (Kelly) mechanism, which elicits a scalar bid per agent, allocates the resource proportionally, and charges payments equal to the bids. Follow-up mechanisms improve social welfare, but sacrifice simplicity by introducing complex allocation rules or unintuitive payments.

We introduce a unified framework for designing simple resource allocation mechanisms with proportional-style allocations and uniform pricing. Our framework yields a family of mechanisms that interpolate between the Kelly mechanism and the first-price auction. These mechanisms strictly improve upon Kelly’s efficiency guarantees, even achieving full efficiency in equilibrium, while also providing revenue guarantees relative to the VCG mechanism.
\end{abstract}

\section{Introduction}

The allocation of a divisible resource among strategic agents is a fundamental problem in algorithmic game theory, with numerous applications including packet routing~\cite{K97,S19,K98,La00}, congestion control~\cite{MB06,JT04,JT09,CV18}, scheduling, and blockchain systems~\cite{AN19,BGR24}. In its most basic form, a central authority seeks to allocate a perfectly divisible resource among $n$ strategic agents. Each agent $i \in [n]$ is associated with a concave valuation function $v_i:[0,1]\to\mathbb{R}_+$, where $v_i(x_i)$ denotes the value that agent $i$ derives from receiving an amount $x_i \in [0,1]$ of the resource. The objective of the central authority is to maximize the overall \emph{social welfare}, $
\sum_{i=1}^n v_i(x_i).$ 

Two fundamental challenges arise in this setting. First, each valuation function $v_i$ is private information known only to agent $i$, and agents may strategically misreport their preferences to improve their own utility. Second, valuation functions are infinite-dimensional objects, making them impossible to communicate efficiently to the central authority~\cite{JT04,JT09}.

The \emph{Kelly mechanism}~\cite{K97,K98}, also known as the \emph{proportional allocation mechanism}, provides an elegant solution to this problem. In the mechanism, each agent $i \in [n]$ submits a scalar bid $\theta_i \geq 0$ that is the amount of money agent $i$ pays. Then, the resource is allocated proportionally to the bids $x_i \propto \theta_i$. Due to its simplicity, the Kelly mechanism has been extensively studied (see Section~\ref{s:related}). In their seminal work, Johari and Tsitsiklis~\cite{JT04} showed that the Kelly mechanism admits a unique Nash equilibrium and that the social welfare at equilibrium is at least $75\%$ of the optimal social welfare.

\subsection{Our Contribution and Results}
In this work, we propose a general framework for designing \textit{simple and intuitive} mechanisms for resource allocation that are able to achieve social welfare arbitrarily close to the optimal social welfare.

More precisely, we propose a framework based on the idea of \textit{scalar-parameterized proxy valuation functions}. Each agent $ i \in [n]$ submits a scalar $\theta_i \geq 0$; then the mechanism considers, for each agent $i$, the \textit{proxy valuation function} $\tilde{v}_i(x_i) := \theta_i \cdot f(x_i)$,  where $f(x)$ is a predefined concave function. Our mechanism outputs the allocation $\x := \mathrm{argmax}_{\x} \sum_{i=1}^n \tilde{v}_i(x_i)$ maximizing the social welfare with respect to the proxy valuation function. Our framework then uses the Lagrange multiplier $\lambda \geq 0$ (of the respective mathematical program) as the \textit{unit price of the good} and charges $\lambda \cdot x_i$ to each agent $i$.

We apply the framework above in the case of proxy functions of the form,
\[\tilde{v}_i(x_i) := \theta_i \cdot \underbrace{\frac{x^{1-\alpha}}{1 - \alpha}}_{f(x)} ~~~~\text{for some parameter }\alpha \in (0,1].\]
In this case the mechanism above takes an \textit{$\alpha$-proportional form} where 
$x_i \propto \theta_i^{1/\alpha}$ and $
\lambda := \left(\sum_{i=1}^n \theta_i^{1/\alpha}\right)^\alpha$. As a result, for $\alpha = 1$ our framework recovers the famous Kelly mechanism while for $\alpha \rightarrow 0$ our framework recovers the First-Price Auctions. We provide the following results for the $\alpha$-proportional mechanism. 

\textit{Existence and Uniqueness of Nash Equilibrium.} We establish that for any $\alpha \in (0,1]$ the $\alpha$-proportional mechanism always admits a Nash equilibrium. Moreover, under the mild assumption that each valuation function is increasing and differentiable, we additionally show that the Nash equilibrium is unique.

\textit{Improved Social Welfare.} The main contribution of our work is establishing that the social welfare of the Nash Equilibrium of the $\alpha$-proportional mechanism 
is at least $\frac{1+2\sqrt{\alpha}}{(1+\sqrt{\alpha})^2}$ times the optimal social welfare,
%
\[\mathrm{Social~Welfare~at~NE~} \geq \frac{1+2\sqrt{\alpha}}{(1+\sqrt{\alpha})^2}\cdot \mathrm{Optimal~Social~Welfare}. \]
Our result recovers the $4/3$ Price of Anarchy (PoA) bound of Johari and Tsitsiklis~\cite{JT04} for the Kelly mechanism as the special case of $\alpha = 1$. Most importantly, it shows that the $\alpha$-proportional mechanism admits PoA arbitrarily close to $1$ as $\alpha$ approaches $0$. We additionally provide a construction showing that our analysis is tight.

The latter is very surprising, in view of an unexpected sharp phase transition. As $\alpha$ approaches $0$, the $\alpha$-proportional mechanism approximates the \textit{First-Price Auction} where the bidders with maximum bid $\max_{i \in [n]} \theta_i$ share the good equally. Despite the fact that the Price of Anarchy of the $\alpha$-proportional mechanism approaches $1$ as $\alpha$ approaches $0$, we show that the PoA of the First-Price Auction is $n/2$ in our setting!

\textit{Revenue Guarantees.} We provide several theoretical results comparing the revenue produced by the $\alpha$-proportional mechanism with the revenue produced by the well-known VCG mechanism. In particular, we establish that at Nash Equilibrium the $\alpha$-proportional mechanism produces revenue that is at least $1/(1+\alpha)$ of the revenue produced by the VCG mechanism in the case of linear valuation functions or for $n =2$ in the case of general concave valuations. We also establish that, if all agents admit the exact same concave valuation function, the $\alpha$-proportional mechanism achieves at least $(n-1)/(n-1 + \alpha)$ fraction of the VCG revenue. 

\textit{Experimental Evaluations.} We experimentally evaluate the $\alpha$-proportional mechanism in case the agents repeatedly select their bids via online learning algorithms. Our experimental evaluation indicates that online learning dynamics converge to the unique Nash Equilibrium. We also experimentally evaluate the resulting social welfare for various values of $\alpha \in (0,1]$. Our experimental evaluation verifies the fact that lower values of $\alpha$ lead to higher social welfare.

\subsection{Related Work}\label{s:related}
The problem of designing mechanisms for resource allocation has been extensively studied over the years. As already discussed, in his seminal work, Kelly~\cite{K97} (see also~\cite{K98}) proposed the proportional mechanism, while later Johari and Tsitsiklis~\cite{JT04} established that every Nash equilibrium achieves at least $3/4$ of the optimal social welfare. More general equilibrium concepts and  efficiency objectives different than the social welfare have been considered by Caragiannis and Voudouris~\cite{CVa,CV18}, Christodoulou et al.~\cite{CS15}, Syrgkanis and Tardos~\cite{ST13}, and Correa et al.~\cite{CSM13}. The allocation rule resulting from our $\alpha$-proportional framework has been studied in the economics literature in the context of Tullock contests~\cite{T75,T82} (see also~\cite{GNR25,V16}). However, Tullock contests operate under a `pay-your-bid' pricing rule. A key distinguishing feature of our framework is that it pairs this allocation rule with a uniform pricing scheme, which allows it to achieve arbitrarily high efficiency without sacrificing the fairness of a fixed unit price.

There have been several attempts to improve upon the Kelly mechanism. Sanghavi and Hajek~\cite{SH04} proposed a mechanism with a Price of Anarchy better than $8/7$ in the case of two agents. Johari and Tsitsiklis~\cite{JT09} introduced the idea of scalar-parameterized valuation mechanisms. In this framework, each agent $i \in [n]$ reports a scalar $\theta_i \geq 0$ that defines a proxy valuation function of the form $\theta_i \cdot f(x_i)$. Johari and Tsitsiklis~\cite{JT09} then compute the allocation that maximizes social welfare with respect to these proxy functions and apply VCG payments. They showed that this mechanism admits a Nash equilibrium that induces a revenue-maximizing allocation. However, on the negative side, the PoA of this approach can be arbitrarily large; that is, there exist Nash equilibria with arbitrarily low social welfare. The same mechanism was independently proposed in~\cite{YH22b}. Maheswaran and Basar~\cite{MB06} proposed the ESPA mechanism, which uses the proportional allocation rule $x_i \propto \theta_i$, but instead of charging each agent $\theta_i$, agent $i$ pays
\[
\left( \sum_{j \neq i} \theta_j \right)
\cdot
\int_0^{\theta_i}
g\!\left(t+\sum_{j\neq i}\theta_j\right)
\cdot
\left(t+\sum_{j\neq i}\theta_j\right)^{-2}
\, dt,
\]
where $g(\cdot)$ is any increasing function. Maheswaran and Basar~\cite{MB06} showed that ESPA achieves a PoA of $1$ and admits a unique Nash equilibrium. However, a crucial drawback of ESPA is the complexity of its payment rule (see Appendix~\ref{app:related}).

In this regard, we emphasize an important advantage of our $\alpha$-proportional mechanism compared to the mechanisms of Johari and Tsitsiklis~\cite{JT09} and Maheswaran and Basar~\cite{MB06}. Beyond its simplicity, the $\alpha$-proportional mechanism allocates resources proportionally to $\theta_i^{1/\alpha}$ and admits a \emph{fixed unit price} $\lambda$ for the good. In particular, each agent pays $\lambda \cdot x_i$. In contrast, under both the mechanisms of Johari and Tsitsiklis~\cite{JT09} and Maheswaran and Basar~\cite{MB06}, different agents effectively pay different unit prices.

\section{Preliminaries} 
In the classical \textit{resource allocation} setting, a central authority wants to allocate a divisible good of unit quantity to $n$ strategic agents. Each agent $i \in [n]$ admits a valuation function $v_i(x_i)$ encoding each value for an $x_i$ share of the good. We assume that each valuation function $v_i(x_i)$ is concave and normalized $v_i(0) = 0$.

The central authority wants to find an allocation $\x \in \Delta_n := \{\x \in \mathbb{R}^n ~:~\sum_{i=1}^n x_i = 1 ~~\text{and}~~x_i \geq 0\}$
that maximizes the overall sum of the valuations of the agents, 
also known as \textit{social welfare},
\[\sw(\x) := \sum_{i=1}^n v_i(x_i).\]
We denote by $\opt^\star := \mathrm{argmax}_{\x \in \Delta_n}\sum_{i=1}^n v_i(x_i)$ the allocation that maximizes social welfare.

\subsection{The VCG Mechanism}
Each valuation function $v_i(x_i)$ is \textit{private information} of the agent $i \in [n]$. The challenge is that agents are selfish and strategic and thus may misreport their valuation function $\tilde{v}_i(x_i)$ in order to get a higher fraction of the good.

In their seminal work, Vickrey, Clarke, and Groves (see~\cite{Milgrom_2004}) provide a general mechanism incentivizing agents to \textit{truthfully} report their valuation functions. In particular,
the VCG mechanism asks each agent $i \in [n]$ to \textit{report} a valuation function $\tilde{v}_i : [0,1] \mapsto \mathbb{R}$. We remark that each agent can report $\tilde{v}_i(x_i) \neq v_i(x_i)$. VCG then computes the allocation $\tilde{\opt} \in \Delta_n$ maximizing the social welfare with respect to the reported functions,
\[\tilde{\opt} := \mathrm{argmax}_{\x \in \Delta_n}\sum_{i=1}^n \tilde{v}_i(x_i).\]
and charges each agent $i \in [n]$,
\[p_i :=\sum_{ j \neq i} \tilde{v}_j(\tilde{o}^{-i}_j) - \sum_{ j \neq i} \tilde{v}_j(\tilde{o}_j)\]
where $\tilde{\opt}^{-i} := \mathrm{argmax}_{\x}\sum_{j \neq i}^n \tilde{v}_j(x_j)$. Thus, the utility of agent $i \in [n]$ is $u_i(\tilde{v}_i,\tilde{v}_{-i}) := v_i(\tilde{\opt}_i) - p_i.$ 

VCG is a \textit{Dominant Strategy Incentive Compatible} mechanism, which means that no matter the declared valuations $\{\tilde{v}_{j}\}_{j \neq i}$, agent $i \in [n]$ always maximizes their utility by selecting $\tilde{v}_i = v_i$. In other words, VCG incentivizes each agent to \textit{truthfully} report their valuation function, meaning that the resulting allocation $\tilde{\opt}$ coincides with the true social welfare maximizer $\opt^\star$.
%
As a result, despite the strategic nature of the agents, VCG is able to compute the allocation that maximizes the social welfare!

\subsection{The Kelly Mechanism}
Despite the fact that the VCG mechanism is able to compute an allocation that maximizes the social welfare, it admits an evident strong caveat:  \textit{ Each agent $i \in [n]$ needs to report an infinite number of values in order to describe its valuation function $v_i(x_i)$.}

The latter renders VCG completely impractical for resource allocation. In response to the latter problem, the \textit{Kelly mechanism} (or \textit{proportional mechanism}) has been proposed~\cite{K97}. In the Kelly mechanism, each agent $i \in [n]$ submits a \textit{scalar bid} $\theta_i \geq 0$. Then the mechanism assigns each agent $i \in [n]$ a fraction $x_i \propto \theta_i$ proportional to the submitted bids and charges it $\theta_i$. In particular, given a set of bids $\mathbf{\theta}:= (\theta_1,\ldots,\theta_n)$ the utility of agent $i \in [n]$ equals
\[u_i(\theta_i,\theta_{-i}) := v_i\left(\frac{\theta_i}{\sum_{j=1}^n \theta_j} \right)- \theta_i.\]
In the Kelly mechanism, each agent $i \in [n]$ needs to select its bid $\theta_i \geq 0$ in order to maximize its individual utility. A Nash Equilibrium is a stable state where no agent has incentive to change its bid.
\begin{definition}
A set of bids $\theta^\star := (\theta^\star_1,\ldots,\theta^\star_n)$ is a Nash Equilibrium if and only if for each agent $i \in [n]$, $u_i(\theta^\star_i,\theta^\star_{-i}) \geq u_i(\theta_i,\theta^\star_{-i})$ for all $\theta_i \geq 0$.     
\end{definition}
In their seminal work, Johari and Tsitsiklis~\cite{JT04} showed that any Nash Equilibrium of the Kelly mechanism admits social welfare that is at least $3/4$ of the optimal social welfare.

\begin{theorem}\cite{JT04}
Let an equilibrium $\theta^\star := (\theta^\star_1,\ldots,\theta^\star_n)$ of the Kelly mechanism and $\x(\theta^\star)$ its induced allocation, $x_i(\theta^\star) := \theta^\star_i/\sum_{j=1}^n \theta^\star_j$. Then the Price of Anarchy (PoA) defined as,
\[\mathrm{PoA}:= \frac{\sw(\opt^\star)}{\sw(\x(\theta^\star))} \leq \frac{4}{3}.\]
\end{theorem}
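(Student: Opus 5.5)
We follow the classical Johari--Tsitsiklis route in three steps: derive first-order equilibrium conditions, reduce to linear valuations, and solve a finite-dimensional worst-case problem.

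\textbf{Step 1: equilibrium conditions.} Fix an equilibrium $\theta^\star$ with total $S=\sum_j \theta^\star_j>0$. At least two agents bid positively, since a lone positive bidder could lower its bid and still receive the whole good. Write $x_i=\theta^\star_i/S$ and $S_{-i}=S-\theta^\star_i$. Agent $i$'s utility $v_i(\theta_i/(\theta_i+S_{-i}))-\theta_i$ is concave in $\theta_i$, because $v_i$ is concave and nondecreasing and the share is concave in $\theta_i$. Hence the first-order (KKT) conditions characterize best responses. With $v_i'$ interpreted as a suitable supergradient if $v_i$ is not differentiable, they read
\[
v_i'(x_i)\,(1-x_i)=S \quad \text{if } x_i>0, \qquad v_i'(0)\le S \quad \text{if } x_i=0 .
\]

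\textbf{Step 2: reduction to linear valuations.} By concavity,
\[
v_i(o^\star_i)\le v_i(x_i)+v_i'(x_i)(o^\star_i-x_i).
\]
Replace each $v_i$ by the linear function $\bar v_i(x)=v_i(x_i)-v_i'(x_i)x_i+v_i'(x_i)x$. This leaves the equilibrium welfare unchanged, can only increase the optimum, and preserves the first-order conditions. The intercepts $v_i(x_i)-v_i'(x_i)x_i\ge 0$ appear in both numerator and denominator of the ratio, and so only decrease it. We may therefore assume $v_i(x)=a_i x$, with $a_i(1-x_i)=S$ for every agent with $x_i>0$ and $a_i\le S$ for agents with $x_i=0$. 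The optimum is then $\max_i a_i$. Normalizing, let agent $1$ have the largest $a_1$. Since it suffices to give all of the good to the largest-$a$ agent, the optimum equals $a_1=S/(1-x_1)$.

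\textbf{Step 3: worst-case optimization.} The equilibrium welfare is
\[
\sum_i a_i x_i = S\sum_{i:\,x_i>0} \frac{x_i}{1-x_i}.
\]
We therefore need to show
\[
\frac{1}{1-x_1}\le \frac{4}{3}\left(\frac{x_1}{1-x_1}+\sum_{i\ge 2}\frac{x_i}{1-x_i}\right)
\]
subject to $\sum_{i\ge 2}x_i=1-x_1$. The function $x/(1-x)$ is convex with slope $1$ at $0$, so $\sum_{i\ge2}x_i/(1-x_i)\ge\sum_{i\ge2}x_i=1-x_1$. This lower bound is approached by splitting the remainder among many tiny bidders. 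The inequality thus reduces to
\[
1\le \frac{4}{3}\bigl(x_1+(1-x_1)^2\bigr).
\]
The quadratic $x_1+(1-x_1)^2=1-x_1+x_1^2$ is minimized at $x_1=1/2$ with value $3/4$, giving exactly $4/3$.

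The main obstacle is Step 2. We must justify handling non-differentiable valuations via supergradients, check that the first-order conditions remain both sufficient and necessary, and confirm that the nonnegative intercepts only help the bound. Provided we use the supergradient appearing in agent $i$'s own optimality condition, the linearization argument goes through.
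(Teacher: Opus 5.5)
Your proof is correct and follows the paper's route. The paper cites this bound from \cite{JT04} but recovers it as the case $\alpha=1$ of Theorem~\ref{t:PoA}, whose proof uses exactly your equilibrium condition $g_i(1-x_i)=S$, drops the nonnegative intercepts $v_i(x_i)-g_ix_i$ from numerator and denominator to get $\mathrm{PoA}\le g_m/\sum_i g_i x_i$, lower-bounds the remaining terms by $g_ix_i\ge x_iS$, and minimizes $1-x_m+x_m^2$ at $x_m=1/2$; your ``reduction to linear valuations'' is the same set of supergradient inequalities, just phrased differently.
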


\textbf{VCG vs Kelly} On the positive side, the Kelly mechanism is way simpler, more intuitive and requires significantly less information exchange than VCG. On the negative side, the Kelly mechanism at its equilibrium achieves only $3/4$ of the optimal social welfare.

\section{Scalar-Parametrized Proxy Functions via Dual Pricing}\label{s:framework}

We introduce a framework for designing \textit{resource allocation} mechanisms that incorporate the simplicity and practicality of the Kelly mechanism together with the higher efficiency of VCG. Our framework builds on the idea of \textit{scalar-parametrized proxy functions} that have also been considered in \cite{JT09}. The idea is that each agent $i \in [n]$ reports a positive scalar $\theta_i \geq 0$ and, then, the mechanism considers as valuation of the agent the function $\tilde{v}_i(x_i) = \theta_i \cdot f(x_i)$ where $f(x)$ is a predefined concave function e.g. $f(x) := \sqrt{x}$ or $f(x):= \ln x$.

\begin{remark}
Our framework combines VCG and the Kelly mechanism, in the sense that each agent $ i\in [n]$ declares a whole proxy valuation function $\tilde{v}_i(x_i) := \theta_i \cdot f(x_i)$ by reporting a single parameter $\theta_i \geq 0$. 
\end{remark}

We consider as predefined function $f(x)$ of the mechanism, $f(x) := x^{1 - \alpha}/(1-\alpha)$ for some parameter $\alpha \in (0,1]$. Given the scalars $\mathbf{\theta}:=(\theta_1,\ldots,\theta_n)$ selected by the agents, our framework computes the allocation $\mathbf{x}:= (x_1,\ldots, x_n)$ maximizing the social welfare with respect to the proxy valuation functions $\{\tilde{v}_i := \theta_i \cdot x_i^{1-\alpha}/(1-\alpha)\}_{i=1}^n$. The latter can be done via solving the following primal/dual pair of programs
\[
\begin{aligned}
& \textbf{Primal}~~~~~~~~~~~~~~~~~~~~~~~~~~~~~~~~~~~~~~~  
&& \textbf{Dual} \\
& \max_{x_i \ge 0} \;\sum_{i=1}^n \theta_i \frac{x_i^{1-\alpha}}{1-\alpha}~~~~~~~~~~~~~~~~~~~~~~~~~~~~~~~~~~~~ 
&& \min_{\lambda \ge 0} \;\lambda + \frac{\alpha}{1-\alpha}\,\lambda^{\frac{\alpha-1}{\alpha}} \sum_{i=1}^n \theta_i^{1/\alpha} \\
& \text{s.t. } \sum_{i=1}^n x_i \le 1
~~~~~~~~~~~~~~
&&
\end{aligned}
\]
Our framework assigns to each agent $i \in [n]$ a quantity $x_i \geq 0$ of the good with respect to the optimal solution $\x := (x_1,\ldots,x_n)$ of the primal program. The dual variable $\lambda \geq 0$ determines the \textit{unit price of the good}. More precisely, each agent $i \in [n]$ pays $\lambda \cdot x_i$ for the $x_i$ share of the good that the agent got. 

To this end, one may wonder why using the dual variable $\lambda \geq 0$ as the unit price of the good. The reason is that it provides the so-called \textit{utility maximization property} with respect to $\tilde{v}_i(x_i)$.

\begin{restatable}{lemma}{lemmaone}
\label{l:1}
Let $(\x,\lambda)$ be the solution of the primal-dual program above. Then for each agent $i \in [n]$,
\[x_i := \mathrm{argmax}_{x_i \geq 0}~\left\{ \theta_i \cdot \frac{x_i^{1 - \alpha}}{1 - \alpha} - \lambda \cdot x_i\right\}.\]
\end{restatable}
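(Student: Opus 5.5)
The plan is to derive the claim from the KKT conditions of the primal program, using the Lagrangian decomposition. The key observation is that the Lagrangian separates across agents:
\[
L(\x,\lambda) = \sum_{i=1}^n \theta_i \frac{x_i^{1-\alpha}}{1-\alpha} + \lambda\Bigl(1-\sum_{i=1}^n x_i\Bigr) = \lambda + \sum_{i=1}^n \Bigl(\theta_i \frac{x_i^{1-\alpha}}{1-\alpha} - \lambda x_i\Bigr).
\]
So the dual function is obtained by maximizing each bracketed term independently over $x_i \ge 0$. The stated dual objective is exactly this. For $\lambda>0$, each term is strictly concave in $x_i$ (for $\theta_i>0$), and setting its derivative to zero gives
\[
\theta_i x_i^{-\alpha}=\lambda \;\Longrightarrow\; x_i(\lambda) = (\theta_i/\lambda)^{1/\alpha}.
\]
Substituting back yields $\lambda + \frac{\alpha}{1-\alpha}\lambda^{(\alpha-1)/\alpha}\sum_i \theta_i^{1/\alpha}$. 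This confirms the dual is the one displayed and that the inner maximizer is unique.

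Next I will invoke strong duality. The primal maximizes a concave objective over a compact convex set, and Slater's condition holds (e.g.\ $x_i = 1/(2n)$). Hence the optimal primal $\x$ and dual $\lambda$ form a saddle point of $L$. In particular $\x$ maximizes $L(\cdot,\lambda)$ over $x\ge 0$. By separability, each $x_i$ maximizes $\theta_i x_i^{1-\alpha}/(1-\alpha) - \lambda x_i$, which is the claim.

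As a cross-check, I will solve the dual explicitly. Minimizing over $\lambda$ gives $\lambda^{-1/\alpha}\sum_i\theta_i^{1/\alpha}=1$, i.e.\ $\lambda = (\sum_j \theta_j^{1/\alpha})^\alpha$. Then $x_i = \theta_i^{1/\alpha}/\sum_j\theta_j^{1/\alpha}$, consistent with the $\alpha$-proportional form in the introduction. With this, one can verify the lemma directly: the first-order condition $\theta_i x_i^{-\alpha}=\lambda$ holds at this $x_i$, and strict concavity makes it the unique argmax.

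The main obstacle is handling edge cases.
\begin{itemize}
\item \emph{The case $\alpha=1$.} Here $x^{1-\alpha}/(1-\alpha)$ must be read as $\ln x$ (with the constant dropped). The same derivative argument applies and gives $x_i=\theta_i/\lambda$, with the dual constant reinterpreted accordingly.
\item \emph{Some $\theta_i=0$.} The agent's term reduces to $-\lambda x_i$, maximized at $x_i=0$. This matches the primal solution as long as some $\theta_j>0$, since then $\lambda>0$.
\item \emph{All $\theta_i=0$.} Then $\lambda=0$ and the argmax is the whole of $[0,\infty)$. I will either exclude this degenerate case or interpret the argmax as set membership.
\end{itemize}
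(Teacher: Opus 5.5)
Your proposal is correct. The paper's proof is essentially the argument you list as a cross-check: it reads off the stationarity condition $\theta_i x_i^{-\alpha}=\lambda$ from the Lagrangian computation in the proof of Lemma~\ref{l:2}, and concludes that $x_i$ is the argmax. The concavity of the single-agent objective that justifies this last step is left implicit there.

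Your main route is different. You use strong duality via Slater's condition, so the primal--dual optimal pair is a saddle point. Separability of the Lagrangian then gives the per-agent maximization directly. This buys three things:
\begin{itemize}
\item It does not need the closed-form solution.
\item It works verbatim for any concave proxy $f$, which fits the general framework of Section~\ref{s:framework}.
\item It automatically covers boundary coordinates. For example, when $\theta_i=0$ and $x_i=0$, the paper's stationarity equation fails as written, because it drops the multipliers $k_i$ for the constraints $x_i\ge 0$; the correct condition there has $k_i=\lambda>0$.
\end{itemize}
The paper's route is shorter, but it depends on the derivation in Lemma~\ref{l:2} and leaves the concavity step, and edge cases such as $\theta_i=0$ or the $\alpha=1$ (logarithmic) reading, unaddressed.
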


In simpler terms, based on the declared proxy functions $\{\hat{v}_i\}_{i=1}^n$, our mechanism computes a unit-price of the good $\lambda \geq 0$ and an allocation $x_i$ for each agent $i \in [n]$ such that $x_i$ maximizes the agent's utility under the assumption that $\tilde{v}_i(x_i) := \theta_i \cdot x_i^{1-\alpha}/(1-\alpha)$ is the actual valuation of the agent. The proof of Lemma~\ref{l:1} follows by KKT conditions and is presented in Appendix~\ref{app:framework}.  
 
Due to the specific structure of the valuation functions $\tilde{v}_i(x_i) := \theta_i \cdot x^{1 - \alpha}/(1-\alpha)$ in Lemma~\ref{l:2}, we establish that the optimal solution of the primal/dual program above admits the following form.

\begin{restatable}{lemma}{lemmatwo}
\label{l:2}
The optimal solution of the primal/dual problem above is $x_i := \theta_i^{1/\alpha}/(\sum_{j\in [n]} \theta_j^{1/\alpha})$. Moreover, the unit price $\lambda:= \left(\sum_{j\in [n]} \theta_j^{1/\alpha}\right)^\alpha$.
\end{restatable}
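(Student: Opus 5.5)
The plan is to solve the KKT system of the primal program directly and then check that the resulting $\lambda$ is both optimal for the stated dual and the Lagrange multiplier used in Lemma~\ref{l:1}. Throughout we assume $\sum_j \theta_j > 0$ (otherwise every feasible point is optimal and $\lambda=0$ trivially). For $\alpha\in(0,1)$ the objective $\sum_i \theta_i x_i^{1-\alpha}/(1-\alpha)$ is concave and the constraint is linear. Slater's condition holds (take $x_i = 1/(2n)$), so KKT conditions are necessary and sufficient, and strong duality holds. The case $\alpha=1$ is read with $f(x)=\ln x$, i.e.\ the Kelly limit, and is handled identically.

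\textbf{Step 1 (stationarity).} I write the Lagrangian $L(\x,\lambda)=\sum_i \theta_i x_i^{1-\alpha}/(1-\alpha) - \lambda(\sum_i x_i -1)$. For any $i$ with $\theta_i>0$ the derivative $\theta_i x_i^{-\alpha}$ blows up as $x_i\to 0$, so the optimum has $x_i>0$. Stationarity then gives $\theta_i x_i^{-\alpha}=\lambda$, i.e.\ $x_i=(\theta_i/\lambda)^{1/\alpha}$; this formula also gives $x_i=0$ when $\theta_i=0$. Since the objective is strictly increasing in each coordinate with $\theta_i>0$, we must have $\lambda>0$, and complementary slackness forces $\sum_i x_i=1$.

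\textbf{Step 2 (normalization).} Substituting into $\sum_i x_i = 1$ gives $\lambda^{-1/\alpha}\sum_j\theta_j^{1/\alpha}=1$. Hence $\lambda=(\sum_j \theta_j^{1/\alpha})^\alpha$ and $x_i=\theta_i^{1/\alpha}/\sum_j\theta_j^{1/\alpha}$, which is the claimed formula. Uniqueness of $\x$ on the support follows from strict concavity of $x^{1-\alpha}$.

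\textbf{Step 3 (dual consistency).} I verify that this $\lambda$ minimizes the stated dual $g(\lambda)=\lambda+\frac{\alpha}{1-\alpha}\lambda^{(\alpha-1)/\alpha}S$, where $S=\sum_j\theta_j^{1/\alpha}$. The derivative is $g'(\lambda)=1-\lambda^{-1/\alpha}S$, which vanishes exactly at $\lambda=S^\alpha$. The function $g$ is convex on $\lambda>0$, because $\lambda^{(\alpha-1)/\alpha}$ is a negative power and $\frac{\alpha}{1-\alpha}>0$. So this stationary point is the unique minimizer, and it matches the multiplier from Step 1.

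The only mild obstacle is the boundary bookkeeping: agents with $\theta_i=0$, and the $\alpha=1$ case, where $x^{1-\alpha}/(1-\alpha)$ must be interpreted as $\ln x$ (up to a constant). In both situations the same stationarity equation $\theta_i f'(x_i)=\lambda$ with $f'(x)=x^{-\alpha}$ applies, so the formulas carry over unchanged.
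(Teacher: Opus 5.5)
Your proposal is correct and follows essentially the same route as the paper: form the Lagrangian, solve the stationarity condition $\theta_i x_i^{-\alpha}=\lambda$ to get $x_i=(\theta_i/\lambda)^{1/\alpha}$, and normalize through the tight constraint $\sum_i x_i=1$ to obtain $\lambda=(\sum_j\theta_j^{1/\alpha})^\alpha$. Your extra steps (Slater's condition, interiority for $\theta_i>0$, complementary slackness forcing the constraint to be tight, and directly minimizing the stated dual) add rigor the paper's proof omits. One small caveat: the dual as written contains $\frac{\alpha}{1-\alpha}$ and is undefined at $\alpha=1$, so Step 3 in that case requires the $\ln$-version of the dual rather than being handled ``identically.''
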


As a result, given a set of bids $\mathbf{\theta} := (\theta_1,\ldots,\theta_n)$, the utility of each agent $i \in [n]$ equals,
\[u_i(\theta_i,\theta_{-i}):= \underbrace{v_i\left(\frac{\theta^{1/\alpha}_i}{\sum_{j = 1}^n \theta_j^{1/\alpha}}\right)}_{\text{valuation of agent } i \in [n]} ~~~-~~~ 
\underbrace{\frac{\theta^{1/\alpha}_i}{\left(\sum_{j=1}^n \theta_j^{1/\alpha} \right)^{1 - \alpha}}}_{\text{payment of agent }i \in [n]}.\]
As a result, our framework comes as a generalization of the Kelly mechanism for various values of $\alpha \in (0,1]$ where given a set of bids $\theta := (\theta_1,\ldots,\theta_n)$ the resource is allocated proportionally to $\theta_i^{1/\alpha}$. Due to this reason, we also call our mechanism as \textit{$\alpha$-proportional mechanism}.

\textit{Connection to Kelly Mechanism.} Our $\alpha$-proportional framework captures the seminal Kelly mechanism for resource allocation. In the Kelly mechanism, each agent $i \in [n]$ submits a bid $\theta_i \geq 0$. Each agent $i \in [n]$ is then allocated a fraction $x_i \propto \theta_i$ and pays $\theta_i$.   

\textit{Connection to First-Price Auctions.} As $\alpha \rightarrow 0$ the $\alpha$-proportional mechanism approaches the First-Price Auction where the bidders with the highest bids share in equal shares the good.

\subsection{Paper Organization and Results}
In Section~\ref{s:NE} we show that the $\alpha$-proportional mechanism always admits a Nash Equilibrium while we show that under very mild assumptions on the valuations functions, the Nash Equilibrium is additionally unique. In Section~\ref{s:PoA} we present the main result of our work establishing that the Price of Anarchy of the $\alpha$-\textit{proportional mechanism} is upper bounded by 
\[ \mathrm{PoA} \leq \frac{(1+\sqrt{\alpha})^2}{1+2\sqrt{\alpha}}.\]
This means that $\alpha \rightarrow 1$, our framework can achieve social welfare arbitrarily close to the optimal social welfare. The latter improves on the $3/4$ approximation of the Kelly mechanism. In Section~\ref{s:rev} we provide revenue guarantees of the $\alpha$-proportional mechanism with respect to the revenue produced by the VCG. Finally, in Section~\ref{s:exp}, we experimentally evaluate our framework with respect to classical online learning algorithms.

\section{Existence and Uniqueness of Nash Equilibrium}\label{s:NE}
In this section, we establish that for any $\alpha \in (0,1]$, the $\alpha$-proportional mechanism always admits a Nash Equilibrium. We additionally show that if the valuation functions $v_i(x_i)$ are increasing and differentiable, the Nash Equilibrium is unique. 

To simplify notation, we consider the change of variables $s_i := \theta_i^{1/\alpha}$ and thus the utility function of agent $i \in [n]$, takes the following form,
\[u_i(s_i,\s_{-i}) := v_i\left(\frac{s_i}{s_i + \sum_{j \neq i} s_j} \right) - \frac{s_i}{\left(s_i + \sum_{j \neq i} s_j \right)^{1-\alpha}}.\]
The \textit{valuation function} $v_i(x_i)$ is concave with respect to $x_i$, however the \text{utility function} $u_i(s_i,s_{i})$ is not necessarily concave with respect to $s_i$ and it is well-known that games with non-concave utility functions may not admit a Nash Equilibrium~\cite{DSZ21}. However, by leveraging the structure of the utility functions $\{u_i(s_i,s_{-i})\}_{i=1}^n$, we establish the existence of a Nash Equilibrium via associating it with the maximizer of the concave function defined up next.

\begin{definition}\label{d:1}
The superdifferential $\partial v(x)$ of a concave function $v : [0,1] \to \mathbb{R}$ is defined as
\[
\partial v(x) := \{ g \in \mathbb{R} : v(y) \le v(x) + g(y-x), \ \forall y \in [0,1]\}.
\]
We also denote with $v'_{+}(x):= \max\{g \in \partial v(x) \}$.
\end{definition}

Given the concave valuation function $v_1(x_1),\ldots,v_n(x_n)$ of the agents, we consider the following potential function
\[
\Psi(x) := \sum_{i=1}^n \int_0^x v'_{i+}(z) \cdot \frac{1 - z}{1 - (1-\alpha)z}dz.\] 

In Theorem~\ref{t:main} we establish the fact that the allocation $\x \in \Delta_n$ maximizing $\Psi(\x)$ corresponds to a Nash Equilibrium and vice versa. The proof of Theorem~\ref{t:main} is presented in Appendix~\ref{app:nash}.

\begin{restatable}{theorem}{lemmathree}
\label{t:main}
Let $\x^\star \in \Delta_n$ be the allocation maximizing $\Psi(\x)$. Then the following hold,
\begin{enumerate}
    \item There exists $\mu > 0$ such that for any agent $i \in [n]$ with $x_i^\star > 0$, \[ \mu = v'_{i+}(x^\star_i) \cdot \frac{1 - x^\star_i}{1 - (1-\alpha)x_i^\star}.\]
\item The strategy profile $\s^\star \in \mathbb{R}_{+}^n$ defined as $s^\star_i = x_i^\star \cdot \mu^{1/\alpha}$ is a Nash Equilibrium of the $\alpha$-proportional mechanism. 
\end{enumerate}
\end{restatable}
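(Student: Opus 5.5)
Let $\phi(z) := \frac{1-z}{1-(1-\alpha)z}$, which lies in $[0,1]$ and is decreasing on $[0,1]$. The plan is to derive Part 1 from the KKT conditions for maximizing $\Psi$ over the simplex. Part 2 then follows by showing that the first-order condition of each agent's best-response problem, in the $s$-variables, reproduces the same quantity $\mu$.

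\textbf{Part 1.} Each summand $\int_0^{x} v'_{i+}(z)\phi(z)\,dz$ is concave in $x$. Indeed, $v'_{i+}$ is nonincreasing because $v_i$ is concave, and $\phi$ is nonnegative and decreasing. Their product is therefore nonincreasing, provided $v'_{i+}\ge 0$; if $v'_{i+}$ becomes negative, I would argue separately that such agents are allocated nothing. So $\Psi$ is concave and continuous on the compact set $\Delta_n$, and a maximizer $\x^\star$ exists. The directional derivatives of $\Psi$ are $v'_{i+}(x_i)\phi(x_i)$ from the right and $v'_{i-}(x_i)\phi(x_i)$ from the left.

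Optimality means that moving an $\epsilon$ amount of mass from any $j$ with $x_j^\star>0$ to any $i$ cannot increase $\Psi$. This gives
\[
v'_{i+}(x_i^\star)\phi(x_i^\star) \le v'_{j-}(x_j^\star)\phi(x_j^\star).
\]
I would define $\mu := \max_i v'_{i+}(x^\star_i)\phi(x^\star_i)$. The main technical nuisance is upgrading the one-sided inequalities to the equality stated in Part 1 when $v_i$ has kinks. The cleanest route I see is to replace the integrand by its right-continuous version. Alternatively, one can note that in the maximization of $\Psi$ only the superdifferential matters, and read $\mu$ as a common element of the superdifferential sets. I would assume differentiability at the optimum for the main line of the argument and handle kinks with superdifferentials.

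For positivity of $\mu$: if some $x_i^\star=1$, then $\phi(1)=0$ and the other agents have $x_j^\star=0$ with $\phi(0)=1$. This gives $\mu = \max_j v'_{j+}(0) > 0$, unless all valuations are flat, which I would exclude as a degenerate case.

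\textbf{Part 2.} Fix $\s_{-i}^\star$ and let $S_{-i} := \sum_{j\ne i}s_j^\star$. Total demand is $S^\star=\mu^{1/\alpha}$ by construction. I would reparametrize agent $i$'s choice by the share $x = s_i/(s_i+S_{-i})$, so that $s_i = xS_{-i}/(1-x)$ and $s_i+S_{-i}=S_{-i}/(1-x)$. The payment becomes
\[
\frac{s_i}{(s_i+S_{-i})^{1-\alpha}} = S_{-i}^{\alpha}\, x\,(1-x)^{-\alpha},
\]
so agent $i$ maximizes
\[
g(x) = v_i(x) - S_{-i}^\alpha\, x(1-x)^{-\alpha}, \qquad x\in[0,1).
\]
The map $x(1-x)^{-\alpha}$ is convex, since its second derivative is positive, so $g$ is concave in $x$. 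This is the key trick: it sidesteps the non-concavity of $u_i$ in $s_i$. The derivative of the cost term is $S_{-i}^\alpha(1-(1-\alpha)x)(1-x)^{-\alpha-1}$. The optimality condition $0\in\partial g(x_i^\star)$ therefore reads
\[
v'_{i}(x_i^\star) \ni S_{-i}^\alpha\,\frac{1-(1-\alpha)x_i^\star}{(1-x_i^\star)^{1+\alpha}}.
\]
Substituting $S_{-i}=(1-x_i^\star)\mu^{1/\alpha}$ turns the right-hand side into $\mu\,(1-(1-\alpha)x_i^\star)/(1-x_i^\star)$, which is exactly the Part 1 identity. For agents with $x_i^\star=0$, the Part 1 inequality $v'_{i+}(0)\le\mu$ gives $g'_+(0)\le 0$, so $s_i=0$ is optimal by concavity.

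The main obstacle is the edge case in which a single agent holds $x_i^\star = 1$. Then $S_{-i}=0$, and the payment $s_i^{\alpha}$ makes the best response degenerate. I would check directly that $s_i^\star=\mu^{1/\alpha}$ is optimal, or that no deviation is profitable. If this fails, I expect it can only happen when all other agents have zero marginal value, and I would treat that case separately.
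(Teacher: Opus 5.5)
Your proof is correct in its main line (modulo the degenerate cases below), and for Part 2 it takes a genuinely different and cleaner route than the paper.

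Part 1 is the paper's step, first-order optimality of $\x^\star$ on $\Delta_n$, done more carefully. Your superdifferential reading is actually needed. The paper's KKT step tacitly treats $\Psi$ as differentiable, and at kinks no single $\mu$ need satisfy the literal equality. For instance, take $v_1=\min(x,1/2)$ and $v_2=cx$ with $0<c<1$. Then $\x^\star=(1/2,1/2)$, agent $2$ forces $\mu=c\,\phi(1/2)$, and agent $1$'s one-sided values are $0$ and $\phi(1/2)$. You also never use concavity of $\Psi$; you only need existence of a maximizer and the exchange inequalities. So the aside about negative $v'_{i+}$ can be dropped. As stated it is inaccurate anyway, since an agent with a peaked valuation can receive a positive share.

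For Part 2, the paper stays in bid space. It computes the supergradient $g_i(1-x_i)/S-S^{\alpha-1}(1-(1-\alpha)x_i)$ of $u_i(\cdot,\s^\star_{-i})$ and checks that it vanishes at $s_i^\star$. It then excludes other critical points because $g\sum_{j\ne i}s^\star_j-S^\alpha(\sum_{j\ne i}s^\star_j+\alpha s_i)$ is strictly decreasing in $s_i$. This single-crossing argument works around the non-concavity of $u_i$ in $s_i$. Your substitution $x=s_i/(s_i+S_{-i})$ shows that the non-concavity is only an artifact of the parametrization. The cost $S_{-i}^\alpha h(x)$, with $h(x)=x(1-x)^{-\alpha}$, is convex, so the first-order condition is sufficient outright, including at kinks and at the boundary $x=0$. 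The identity $S_{-i}^\alpha h'(x_i^\star)=\mu/\phi(x_i^\star)$ also explains where the weight $\phi$ in $\Psi$ comes from.

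The degenerate cases are mishandled, though. Your positivity argument is backwards. If $x_i^\star=1$, your own exchange inequality with $\phi(1)=0$ gives $v'_{j+}(0)\le 0$ for every $j\ne i$. Hence $\mu\le 0$, not $\max_j v'_{j+}(0)>0$. The proposed direct check cannot succeed either. With $S_{-i}=0$, agent $i$ pays $s_i^\alpha$ for the whole good, so halving any positive bid is profitable. If instead $\mu=0$, then $\s^\star=\mathbf{0}$, against which a tiny bid is profitable. When $v_i(1)>0$, such an instance has no equilibrium at all. So the theorem must be read under a nondegeneracy hypothesis, e.g.\ $n\ge 2$ and strictly increasing valuations; the paper also leaves this implicit. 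Under that hypothesis, the exchange inequality rules out $x_i^\star=1$, and strict monotonicity makes every admissible $\mu$ positive.

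Separately, your $\mu:=\max_i v'_{i+}(x_i^\star)\phi(x_i^\star)$ is the \emph{smallest} admissible common value, and it can vanish even when positive ones exist. For $v_i(x)=\min(x,1/n)$ for all $i$, $\x^\star$ is uniform and your $\mu$ is $0$, so $\s^\star=\mathbf{0}$ is not an equilibrium, whereas $\mu=\phi(1/n)$ works. Choose instead the largest admissible value $\min_{j:x_j^\star>0}v'_{j-}(x_j^\star)\phi(x_j^\star)$. By the same exchange inequality, it still dominates $v'_{k+}(0)$ for every zero-share agent $k$.
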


Theorem~\ref{t:main} establishes the existence of Nash Equilibrium since $\Delta_n$ is a compact set and thus there is always a maximizer of $\Psi(\x)$. 

\begin{restatable}{theorem}{lemmafour}
\label{t:unique}
Let $\s^\star:= (s^\star_1,\ldots,s^\star_n)$ be a Nash Equilibrium of the $\alpha$-proportional mechanism.  In case where each $v_i(x_i)$ is differentiable then there exists $\mu > 0$ such that for each agent $ i \in [n]$,
\[v'_i(x_i) = \mu ~~~\text{ if } x^\star_i > 0 ~~~~\text{ and }~~~~v'_i(0) \leq \mu \text{ if } x^\star_i = 0.\]
The latter implies that $\x^\star$ maximizes $\Psi(\x)$. 
\end{restatable}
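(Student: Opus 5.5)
The plan is to derive the first-order conditions of each agent's best response directly in the $\s$-variables, and then to recognize them as the KKT conditions of the concave program $\max_{\x\in\Delta_n}\Psi(\x)$. The displayed condition in the statement should be read with the same weighting as in Theorem~\ref{t:main}. That is, I aim to show $v'_i(x^\star_i)\cdot\frac{1-x^\star_i}{1-(1-\alpha)x^\star_i}=\mu$ whenever $x^\star_i>0$, and $v'_i(0)\le\mu$ whenever $x^\star_i=0$. This is exactly the stationarity condition for $\Psi$, whose $i$-th partial derivative is $v'_i(x_i)\frac{1-x_i}{1-(1-\alpha)x_i}$.

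\emph{Step 1 (nondegeneracy).} Write $S=\sum_j s^\star_j$. First I show $S>0$. If all bids were zero, the allocation is degenerate, and any agent with increasing $v_i$ could bid a tiny $\epsilon>0$. That agent would obtain the whole good at payment $\epsilon^{\alpha}\to 0$, a profitable deviation.

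Next I show that at least two agents bid positively. Suppose only agent $i$ bids positively. Then $x_i=1$ and the payment is $s_i^{\alpha}$, so lowering $s_i$ strictly increases utility while keeping $x_i=1$. This contradicts equilibrium. Consequently $S_{-i}:=S-s^\star_i>0$ for every $i$, and each $u_i(\cdot,\s^\star_{-i})$ is differentiable in $s_i$ on $[0,\infty)$.

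\emph{Step 2 (first-order conditions).} For fixed $S_{-i}>0$, with $x_i=s_i/S$, a direct computation gives
\[
\frac{\partial u_i}{\partial s_i}=\frac{1}{S}\Big[v'_i(x_i)(1-x_i)-S^{\alpha}\big(1-(1-\alpha)x_i\big)\Big].
\]
This uses $\partial x_i/\partial s_i=(1-x_i)/S$ and $\partial_{s_i}(s_iS^{\alpha-1})=S^{\alpha-1}(1-(1-\alpha)x_i)$. Set $\mu:=S^{\alpha}>0$, which is the same for all agents.

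At an interior best response ($s^\star_i>0$) the derivative vanishes. This gives $v'_i(x^\star_i)\frac{1-x^\star_i}{1-(1-\alpha)x^\star_i}=\mu$. At $s^\star_i=0$, the right derivative must be $\le 0$, which gives $v'_i(0)\le\mu$. Since $1-(1-\alpha)x_i>0$ on $[0,1]$, the division is legitimate. This proves the displayed conditions.

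\emph{Step 3 (optimality for $\Psi$).} Each summand $\int_0^{x}v'_i(z)\frac{1-z}{1-(1-\alpha)z}\,dz$ is concave. The integrand is a product of two nonnegative nonincreasing functions: $v'_i\ge 0$ because $v_i$ is increasing, and $v'_i$ is nonincreasing because $v_i$ is concave; the weight $\frac{1-z}{1-(1-\alpha)z}$ is nonincreasing with derivative $-\alpha/(1-(1-\alpha)z)^2$. A product of two such functions is nonincreasing, so its integral is concave. Hence $\Psi$ is concave and separable over the simplex.

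For such a program, the KKT conditions are sufficient for optimality. These conditions are the existence of a multiplier $\mu$ with $\partial_i\Psi(\x^\star)=\mu$ on the support and $\partial_i\Psi(\x^\star)\le\mu$ off the support, where $\partial_i\Psi(0)=v'_i(0)$. Step 2 supplies exactly these conditions, so $\x^\star$ maximizes $\Psi$.

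The main obstacle is Step 1, namely ruling out degenerate equilibria so that $u_i$ is differentiable at $s^\star_i$ and the boundary case $s^\star_i=0$ is handled correctly. Once $S_{-i}>0$ is secured, the remainder is a routine derivative computation followed by concave KKT sufficiency.
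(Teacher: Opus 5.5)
Your proposal is correct and follows essentially the same route as the paper: first-order conditions of each agent's best response with $\mu := S^{\alpha}$ give $v'_i(x^\star_i)\frac{1-x^\star_i}{1-(1-\alpha)x^\star_i}=\mu$ on the support and $v'_i(0)\le\mu$ off it, which are then read as the KKT conditions for maximizing $\Psi$ over $\Delta_n$. The paper's proof also derives this weighted form, so your reading of the display is the right one. Your Step 1 (ruling out $S=0$ and single-bidder profiles so that $S_{-i}>0$) and your explicit concavity argument for $\Psi$ supply details the paper leaves implicit, and your non-strict inequality at $x^\star_i=0$ is the correct one.
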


Theorem~\ref{t:unique} establishes the uniqueness of Nash Equilibrium if each valuation function $v_i(x_i)$ is differentiable and strictly increasing. In such a case, the function $\Psi(\x)$ is strictly concave and thus admits a unique maximizer in $\Delta_n$. Hence there exists a unique Nash Equilibrium.

\section{Bounding the Price of Anarchy}\label{s:PoA}
In this section, we provide a tight bound of the Price of Anarchy on the $\alpha$-proportional mechanism. Our result is formally stated and proven in Theorem~\ref{t:PoA}.
\begin{theorem}\label{t:PoA}
The price of anarchy of the $\alpha$-proportional mechanism is at most $\frac{(1+\sqrt{\alpha})^2}{1+2\sqrt{\alpha}}$.
\end{theorem}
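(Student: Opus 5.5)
The plan is to follow the Johari--Tsitsiklis approach: first reduce to linear valuations using concavity, then use the equilibrium first-order conditions to bound the worst case by a one-variable optimization problem.

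\textbf{Step 1: equilibrium marginal condition.} Fix a Nash equilibrium $\s^\star$ with allocation $\x$ and total $S=\sum_j s^\star_j$. Write agent $i$'s payment as a function of its own share $x_i$, with $S_{-i}$ fixed. Then $s_i = x_i S_{-i}/(1-x_i)$, and the payment equals $x_i S^{\alpha}$ with $S = S_{-i}/(1-x_i)$. Differentiating in $x_i$, the marginal price at the equilibrium point is $S^\alpha\,\frac{1-(1-\alpha)x_i}{1-x_i}$. Set $\mu := S^\alpha$. Optimality of $x_i$ for agent $i$ then gives, through the superdifferential,
\[ a_i := v'_{i+}(x_i) \ \ge\ \mu\,\frac{1-(1-\alpha)x_i}{1-x_i}\ \text{ and }\ v'_{i-}(x_i)\le \mu\,\frac{1-(1-\alpha)x_i}{1-x_i} \quad\text{if } x_i>0, \]
and $v'_{i+}(0)\le \mu$ if $x_i=0$. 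This is the general (possibly nondifferentiable) version of Theorem~\ref{t:unique}. For the bound we use a supergradient $a_i\in\partial v_i(x_i)$ with $a_i = \mu\frac{1-(1-\alpha)x_i}{1-x_i}$ when $x_i>0$; such a value exists by the two one-sided inequalities. When $x_i=0$ we take $a_i=v'_{i+}(0)\le\mu$.

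\textbf{Step 2: linearization.} Let $\opt^\star$ be optimal. By concavity, $v_i(o^\star_i)\le v_i(x_i)+a_i(o^\star_i-x_i)$. Summing over $i$ and using $\sum_i o^\star_i=1$,
\[ \sw(\opt^\star)\le \sw(\x) - B + M,\qquad B:=\sum_i a_i x_i,\quad M:=\max_i a_i. \]
Since $v_i(0)=0$ and $v_i$ is concave, $v_i(x_i)\ge a_i x_i$, so $\sw(\x)\ge B$. The map $A\mapsto (A-B+M)/A$ is nonincreasing for $A\ge B$ (because $M\ge B$), hence
\[ \mathrm{PoA}\le M/B. \]

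\textbf{Step 3: reduction to one variable.} Agents with $x_i=0$ have $a_i\le\mu$, while every agent with positive share has $a_i\ge\mu$. Since some agent receives a positive share, $M$ is attained at an agent with positive share; call it agent 1, and set $y:=x_1\in[0,1)$ and $D:=1-(1-\alpha)y$. For every other agent, $a_i\ge\mu$, because $\frac{1-(1-\alpha)x}{1-x}\ge 1$. Therefore
\[ B\ \ge\ \mu\frac{yD}{1-y}+\mu(1-y), \qquad M=\mu\frac{D}{1-y}, \]
which gives
\[ \mathrm{PoA}\le \frac{D}{yD+(1-y)^2}=\frac{1-(1-\alpha)y}{y-(1-\alpha)y^2+(1-y)^2}. \]

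\textbf{Step 4: one-dimensional maximization (main obstacle).} It remains to show that
\[ \sup_{y\in[0,1)}\frac{1-(1-\alpha)y}{1-y+\alpha y^2} = \frac{(1+\sqrt\alpha)^2}{1+2\sqrt\alpha}. \]
I would set the derivative to zero. The numerator of the derivative is the quadratic
\[ -(1-\alpha)(1-y+\alpha y^2)-(1-(1-\alpha)y)(2\alpha y-1)=0, \]
which simplifies to
\[ \alpha(1-\alpha)y^2-2\alpha y+\alpha=0,\quad\text{i.e.}\quad (1-\alpha)y^2-2y+1=0. \]
Its root in $[0,1)$ is $y^\star=\frac{1-\sqrt\alpha}{1-\alpha}=\frac1{1+\sqrt\alpha}$. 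Substituting, $D=\frac{1+2\sqrt\alpha}{1+\sqrt\alpha}\cdot\frac{1}{1}$ after simplification, and the ratio becomes $\frac{(1+\sqrt\alpha)^2}{1+2\sqrt\alpha}$. Sanity check: at $\alpha=1$ this gives $y^\star=1/2$ and the value $4/3$. The endpoint $y=0$ gives ratio $1$, and as $y\to1$ the ratio tends to $\alpha/\alpha=1$, so the interior critical point is the maximum.

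The delicate parts are the algebra in this step and the careful treatment of nondifferentiable valuations in Step 1 (choosing the right supergradient). Tightness would follow by instantiating linear valuations matching the extremal $y^\star$ with many small agents of slope $\mu$.
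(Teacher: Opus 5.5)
Your proposal is correct and follows essentially the same route as the paper: choose a supergradient equal to the marginal price $\mu\frac{1-(1-\alpha)x_i}{1-x_i}$ with $\mu=S^\alpha$, use the concavity linearization to get $\mathrm{PoA}\le \max_i a_i/\sum_i a_i x_i$, bound $a_i x_i\ge \mu x_i$ for the remaining agents, and maximize $\frac{1-(1-\alpha)y}{1-y+\alpha y^2}$ at $y=\frac{1}{1+\sqrt{\alpha}}$. Two slips need fixing: at an interior optimum the one-sided conditions are $v'_{i+}(x_i)\le \mu\frac{1-(1-\alpha)x_i}{1-x_i}\le v'_{i-}(x_i)$ (as you wrote them they are reversed, which would force differentiability), and at $y^\star$ one gets $D=\sqrt{\alpha}$, not $\frac{1+2\sqrt{\alpha}}{1+\sqrt{\alpha}}$; with $1-y^\star+\alpha (y^\star)^2=\frac{\sqrt{\alpha}(1+2\sqrt{\alpha})}{(1+\sqrt{\alpha})^2}$ this gives your stated final value.
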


\begin{proof}
Consider an allocation instance with $n$ players, in which player $i\in [n]$ has the monotone non-decreasing valuation function $v_i: [0,1]\rightarrow \R_{\geq 0}$. Let $\theta=(\theta_1,\ldots,\theta_n)$ be a Nash Equilibrium of the $\alpha$-proportional mechanism. For convenience, we will use the equivalent transformation $s_i=\theta_i^{1/\alpha}$. Hence, the utility of player $i$ at equilibrium is
    \begin{align*}
        u_i(\s)=v_i\left(\frac{s_i}{\sum_{j\in [n]}{s_j}}\right)-\frac{s_i}{\left(\sum_{j\in [n]}{s_j}\right)^{1-\alpha}}. 
    \end{align*}
Since $\s$ is a Nash Equilibrium, then $s_i$ maximizes the function $u_i(\cdot, \s_{-i})$. Thus, there exists a supergradient $g_i \in \partial v_i(x_i)$ such that
    \begin{align*}
        g_i\cdot \frac{\sum_{j\not=i}{s_j}}{\left(\sum_{j\in [n]}{s_j}\right)^2}-\frac{\sum_{j\in [n]}{s_j}-(1-\alpha)s_i}{\left(\sum_{j\in [n]}{s_j}\right)^{2-\alpha}} &=0.
    \end{align*}
    Rearranging and using the definition $x_i = s_i / \sum_{j\in[n]} s_j$, we get
    \begin{align}\label{eq:g_i}
    g_i &= \frac{1-(1-\alpha)x_i}{1-x_i}\cdot \left(\sum_{j\in [n]}{s_j}\right)^{\alpha}
    \end{align}
    
    Using the definition of the supergradient $g_i$ and the fact that $v_i(0)=0$, we have $v_i(0)\leq v_i(x_i)+g_i \cdot (0-x_i)$ which implies that $v_i(x_i)\geq g_i\cdot x_i$. Hence, 
    \begin{align}\label{eq:sw}
    \sw(\x) &= \sum_{i\in [n]}{v_i(x_i)} \geq \sum_{i\in [n]}{g_i\cdot x_i}.
    \end{align}    
    Let $\opt^\star=(o^\star_1, \ldots, o^\star_n)$ denote the social-welfare optimal allocation. Let $m:=\argmax_{i\in [n]}{g_i}$ and using the definition of the supergradients $g_i$, we have
    \begin{align}\nonumber
        \sw(\opt^\star) &= \sum_{i\in [n]}{v_i(o^\star_i)} \leq \sum_{i\in [n]}{\left(v_i(x_i)+g_i\cdot (o^\star_i-x_i)\right)} = \sw(\x)+\sum_{i\in [n]}{g_i\cdot o^\star_i}-\sum_{i\in [n]}{g_i \cdot x_i}\\\label{eq:opt}
        &\leq \sw(\x)+g_m\cdot \sum_{i\in [n]}{o^\star_i}-\sum_{i\in [n]}{g_i\cdot x_i}=\sw(\x)+g_m-\sum_{i\in [n]}{g_i\cdot x_i}.
    \end{align}
    Using Equations (\ref{eq:sw}) and (\ref{eq:opt}), we have that the price of anarchy is
    \begin{align}\label{eq:sw_star_over_sw}
    \text{PoA} &\leq \frac{\sw(\opt^\star)}{\sw(\x)} \leq \frac{\sw(\x)+g_m-\sum_{i\in [n]}{g_i\cdot x_i}}{\sw(\x)} \leq \frac{g_m}{\sum_{i\in [n]}{g_i\cdot x_i}}.
    \end{align}
    The third inequality follows by subtracting the non-negative quantity $\sw(\opt^\star)-\sum_{j\in [n]}{g_j\cdot x_j}$
    from both the numerator and the denominator. Now, notice that Equation (\ref{eq:g_i}) implies that $g_i\cdot x_i \geq x_i\cdot \left(\sum_{j\in [n]}{s_j}\right)^{\alpha}$ and, hence, $\sum_{i\not=m}{g_i\cdot x_i}\geq (1-x_m)\cdot \left(\sum_{j\in [n]}{s_j}\right)^{\alpha}$. Using this observation and the definition of $g_i$ from Equation (\ref{eq:g_i}), Equation (\ref{eq:sw_star_over_sw}) yields
    \begin{align*}
        \text{PoA} &\leq \frac{g_m}{g_m\cdot x_m+\sum_{i\not=m}{g_i\cdot x_i}} \leq \frac{1-(1-\alpha)x_m}{\left(1-(1-\alpha)x_m\right)x_m+(1-x_m)^2}=\frac{1-(1-\alpha)x_m}{1-x_m+\alpha x_m^2}.
    \end{align*}
    The RHS is maximized for $x_m=\frac{1}{1+\sqrt{\alpha}}$ to the desired value of $\frac{(1+\sqrt{\alpha})^2}{1+2\sqrt{\alpha}}$.
\end{proof}

In Theorem~\ref{t:lower_bound}, we can show that the price of anarchy bound of Theorem~\ref{t:PoA} is tight. The proof is based on providing a specific instance of linear valuation functions, and showing that the ratio of the optimal social welfare and the social welfare of the Nash Equilibrium is exactly $\frac{(1+\sqrt{\alpha})^2}{1+2\sqrt{\alpha}}$.

\begin{restatable}{theorem}{theoremfive}
\label{t:lower_bound}
For $\alpha \in (0, 1]$, the price of anarchy the $\alpha$-proportional mechanism is at least $\frac{(1+\sqrt{\alpha})^2}{1+2\sqrt{\alpha}}$.
\end{restatable}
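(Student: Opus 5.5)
The plan is to exhibit a family of instances with linear valuations on which every inequality in the proof of Theorem~\ref{t:PoA} is nearly tight, and then let a parameter tend to infinity.

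That proof loses slack in three places. First, $v_i(x_i)\ge g_i x_i$ is tight for linear valuations. Second, $\sum_i g_i o^\star_i\le g_m$ is tight when the optimum gives everything to the agent with the largest slope. Third, $g_i x_i\ge x_i\left(\sum_j s_j\right)^\alpha$ for $i\neq m$ is tight only when $x_i\to 0$, because of the factor $\frac{1-(1-\alpha)x_i}{1-x_i}$. This suggests one ``big'' agent and many ``small'' agents.

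\textbf{Instance.} Take agent $1$ with $v_1(x)=a x$ and $k$ identical agents with $v_i(x)=x$, for $i=2,\ldots,k+1$.

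\textbf{Characterizing the equilibrium.} Fix the target share $\bar x:=\frac{1}{1+\sqrt\alpha}$ and set $y:=(1-\bar x)/k$. Then choose
\[
\mu:=\frac{1-y}{1-(1-\alpha)y}, \qquad a:=\mu\cdot\frac{1-(1-\alpha)\bar x}{1-\bar x}.
\]
With these choices, the allocation $x_1=\bar x$, $x_i=y$ for $i\ge 2$ satisfies, for every agent,
\[
v'_{i+}(x_i)\,\frac{1-x_i}{1-(1-\alpha)x_i}=\mu .
\]
This is exactly the KKT condition for maximizing $\Psi$ over $\Delta_n$. For linear valuations with positive slopes, each integrand $v'_{i+}(z)\frac{1-z}{1-(1-\alpha)z}$ is strictly decreasing in $z$, so $\Psi$ is strictly concave and this allocation is its unique maximizer. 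Theorem~\ref{t:main} then yields a Nash Equilibrium inducing it, with $s_i=x_i\mu^{1/\alpha}$. Uniqueness follows from Theorem~\ref{t:unique}, although we only need that this bad equilibrium exists.

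\textbf{Computing the ratio.} Since $\frac{1-(1-\alpha)\bar x}{1-\bar x}>1$ strictly and $\mu\to 1$ as $k\to\infty$, we have $a>1$ for large $k$. Hence the optimum assigns the whole good to agent $1$, so $\sw(\opt^\star)=a$, while $\sw(\x)=a\bar x+(1-\bar x)$. Letting $k\to\infty$ gives $a\to\frac{1-(1-\alpha)\bar x}{1-\bar x}$, and therefore
\[
\frac{\sw(\opt^\star)}{\sw(\x)}\;\to\;\frac{1-(1-\alpha)\bar x}{\left(1-(1-\alpha)\bar x\right)\bar x+(1-\bar x)^2}=\frac{1-(1-\alpha)\bar x}{1-\bar x+\alpha\bar x^2}.
\]
At $\bar x=\frac{1}{1+\sqrt\alpha}$ this equals $\frac{(1+\sqrt\alpha)^2}{1+2\sqrt\alpha}$, as computed at the end of the proof of Theorem~\ref{t:PoA}. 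Since the PoA is a supremum over instances, this gives the lower bound.

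\textbf{Main obstacle.} The delicate point is making sure the prescribed allocation really is an equilibrium for finite $k$. The plan handles this by choosing $a$ as a function of the target allocation, rather than solving for the allocation given $a$, and then checking $a>1$ so that the optimum is concentrated on agent $1$. A second subtlety is that the bound is attained only in the limit $k\to\infty$. If exact attainment were required one would need a different construction, but as a supremum this is sufficient.
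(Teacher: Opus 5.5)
Your proposal is correct and is essentially the paper's construction. The paper uses one linear agent who receives $1/(1+\sqrt\alpha)$ and $n-1$ identical linear agents who split the rest evenly, with slopes tuned so that the common-$\mu$ condition of Theorem~\ref{t:main} holds, and obtains the bound in the limit $n\to\infty$. The differences are minor: you normalize the small agents' slope to $1$, so your $a$ is the reciprocal of the paper's $\gamma$, and defining $a$ implicitly avoids the paper's closed form for $\gamma$, whose numerator term $+\alpha$ should read $+\alpha\sqrt\alpha$. You also justify via concavity of $\Psi$ that the allocation is the maximizer before invoking Item~2, which is slightly more careful than the paper's appeal to Item~1.
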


\subsection{A Sharp Phase Transition for First-Price Auctions}
Notice that as $\alpha \rightarrow 0$, the $\alpha$-proportional mechanism approaches the First-Price Auction~(FPA), where the highest bidders share the good equally. Interestingly, there exists a sharp phase transition between the case $\alpha \rightarrow 0$ and the FPA itself. We next show that the PoA for the FPA can be as large as $n / 2$ where $n$ is the number of agents.

Consider the case of $n$ agents with identical valuation functions, where each agent $i \in [n]$ has the valuation function
\[
v_i(x_i) =
\begin{cases}
n \cdot x_i & \text{if } x_i \leq 1/n, \\
1 & \text{otherwise}.
\end{cases}
\]

Notice that the optimal social welfare in this case is exactly $n$, achieved when each agent receives $1/n$ of the good. Now consider the strategy profile where $\theta_1 = \theta_2 = 1$ and $\theta_i = 0$ for all $i \geq 3$. This constitutes a Nash equilibrium. However, under this equilibrium, only agents $1$ and $2$ receive positive allocations, each obtaining $1/2$ of the good. Thus, the resulting social welfare is only $2$.

\section{Revenue Guarantees of the Proportional Mechanism}\label{s:rev}
	
In this section, we provide formal guarantees on the revenue of the $\alpha$-proportional mechanism with respect to VCG. The overall revenue of VCG, denoted as $\RVCG$ equals,
\[\RVCG := \sum_{i=1}^n \left( \sum_{ j \neq i} v_j(o^{-i}_j) - \sum_{ j \neq i} v_j(o^\star_j) \right).\]
where $\opt^{-i} :=  \mathrm{argmax}_{\x}\sum_{j \neq i}^n v_j(x_j)$.
At the same time, the overall revenue of the $\alpha$-proportional mechanism at a Nash Equilibrium $\theta^\star := (\theta^\star_1,\ldots,\theta^\star_n)$, denoted as $\mathrm{Rev}_{\alpha}$, equals
\[\mathrm{Rev}_{\alpha} := \lambda^\star = \left( \sum_{i=1}^n \theta_i^{\star 1/\alpha} \right)^{\alpha}.\]

In Theorem~\ref{t:linear} we establish that in case of linear valuation functions, the revenue of $\alpha$-proportional is at least $1/(1+\alpha)$ the revenue of VCG.

\begin{restatable}{theorem}{theoremsix}
\label{t:linear}
Let each agent $i \in [n]$ have a linear valuation function $v_i(x) = \zeta_i x$, with $\zeta_i \geq 0$. Then, $\RVCG \leq (1+\alpha)\Rev$. 
\end{restatable}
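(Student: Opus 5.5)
The plan is to compute $\RVCG$ explicitly for linear valuations and then lower bound $\lambda^\star$ using the first-order equilibrium conditions already derived in the proof of Theorem~\ref{t:PoA}, namely Equation~(\ref{eq:g_i}).

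\emph{Step 1: VCG revenue.} Order the agents so that $\zeta_1 \ge \zeta_2 \ge \dots \ge \zeta_n$. With linear valuations, the welfare-optimal allocation gives the whole good to agent $1$. Removing an agent $i \ne 1$ does not change the others' optimal welfare, so $p_i = 0$. Removing agent $1$ raises the others' optimal welfare from $0$ to $\zeta_2$. Hence $\RVCG = \zeta_2$. If $\zeta_2 = 0$ the claim is trivial, so assume $\zeta_2 > 0$.

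\emph{Step 2: equilibrium conditions.} Work in the variables $s_i = \theta_i^{1/\alpha}$ and write $S = \sum_j s_j$, so that $\lambda^\star = S^\alpha$. Since $v_i$ is linear, its supergradient is $\zeta_i$.
\begin{itemize}
\item For an active agent ($x_i > 0$), Equation~(\ref{eq:g_i}) gives $\zeta_i = \lambda^\star\, h(x_i)$ with $h(x) := \frac{1-(1-\alpha)x}{1-x}$.
\item For an inactive agent ($s_i = 0$), the right derivative of $u_i$ at $s_i = 0$ must be non-positive. That derivative equals $\zeta_i/S - S^{\alpha-1}$, so $\zeta_i \le \lambda^\star$.
\end{itemize}
I also note that at equilibrium at least two agents are active whenever $\zeta_2 > 0$. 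With a single active agent $j$, agent $j$'s utility $\zeta_j - s_j^\alpha$ is strictly decreasing in $s_j$, so $j$ has no best response. In particular $x_i < 1$ for every $i$, and $h$ is finite wherever it is used. Finally, $h$ is strictly increasing on $[0,1)$, because its derivative has sign $-(1-\alpha)(1-x) + (1-(1-\alpha)x) = \alpha > 0$.

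\emph{Step 3: the bound.} Consider agent $2$.
\begin{itemize}
\item If $x_2 = 0$, then $\zeta_2 \le \lambda^\star \le (1+\alpha)\lambda^\star$ and we are done.
\item If $x_2 > 0$, I compare agent $2$ with agent $1$. Agent $1$ must be active: otherwise $\zeta_1 \le \lambda^\star < \lambda^\star h(x_2) = \zeta_2$, using $h(x_2) > 1$, which contradicts $\zeta_1 \ge \zeta_2$. Then $\lambda^\star h(x_1) = \zeta_1 \ge \zeta_2 = \lambda^\star h(x_2)$, and monotonicity of $h$ yields $x_1 \ge x_2$. Since $x_1 + x_2 \le 1$, this gives $x_2 \le 1/2$. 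Therefore $\zeta_2 = \lambda^\star h(x_2) \le \lambda^\star h(1/2) = (1+\alpha)\lambda^\star$.
\end{itemize}
In both cases $\RVCG = \zeta_2 \le (1+\alpha)\Rev$.

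The main obstacle is Step 3. The rest is bookkeeping; the real content is the observation that the second-highest-value agent can hold at most half the good at equilibrium, which caps its markup factor $h(x_2)$ at $h(1/2) = 1+\alpha$. Making the boundary cases rigorous (inactive agents, ties among the $\zeta_i$, and the impossibility of a single active agent) is where the most care is needed.
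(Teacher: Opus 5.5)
Your proposal is correct and takes essentially the same route as the paper: compute $\RVCG=\zeta_2$, then write agent $2$'s equilibrium condition as $\zeta_2=\lambda^\star\bigl(1+\alpha\frac{x_2}{1-x_2}\bigr)$ (this is your $h(x_2)$), and use $x_2\le x_1$ together with $x_1+x_2\le 1$ to get $x_2\le 1/2$. You also handle details the paper's proof leaves implicit: the case where agent $2$ is inactive, the argument that agent $1$ must be active, and the use of strict monotonicity of $h$ to derive $x_1\ge x_2$ from $\zeta_1\ge\zeta_2$.
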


In the case of general concave valuations we establish that the revenue guarantees of the $\alpha$-proportional with respect to VCG for the special case of $n = 2$ and the case of identical valuation functions.

\begin{restatable}{theorem}{theoremseven}
\label{t:n-players}
For $n \geq 2$ agents with identical concave valuations, 
	\[ \RVCG \leq \Rev \left(1+\frac{\alpha}{n-1}\right) \]
\end{restatable}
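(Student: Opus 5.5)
The plan is to compute $\RVCG$ in closed form using symmetry and concavity, and to bound the resulting chord slope of $v$ by a supergradient. That supergradient is then tied to the equilibrium price $\lambda^\star=\Rev$ through the first-order condition (\ref{eq:g_i}) from the proof of Theorem~\ref{t:PoA}. Let $v$ denote the common valuation.

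\emph{Step 1 (VCG revenue).} For identical concave valuations, the equal split $o^\star_j=1/n$ maximizes $\sum_j v(x_j)$ over $\Delta_n$, by Jensen's inequality. Likewise, without agent $i$ the equal split $1/(n-1)$ among the remaining $n-1$ agents is optimal. Each agent therefore pays $(n-1)\bigl[v(\tfrac{1}{n-1})-v(\tfrac1n)\bigr]$, so
\[
\RVCG = n(n-1)\Bigl[v\bigl(\tfrac{1}{n-1}\bigr)-v\bigl(\tfrac1n\bigr)\Bigr].
\]
The bracket equals the chord slope of $v$ on $[\tfrac1n,\tfrac1{n-1}]$ multiplied by the interval length $\tfrac{1}{n(n-1)}$. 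Hence $\RVCG$ is exactly this chord slope, call it $c$.

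\emph{Step 2 (bounding $c$ at an equilibrium).} Fix any Nash Equilibrium and let $\lambda=\Rev$ be its price. I do not want to assume the equilibrium is symmetric, since uniqueness (Theorem~\ref{t:unique}) needs differentiability. So I pick an agent $k$ with minimal share, which gives $x_k\le 1/n$. By concavity, any $g\in\partial v(y)$ with $y\le 1/n$ satisfies $g\ge v'_+(1/n)\ge c$. There are two cases.

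If $x_k>0$, the first-order condition from the proof of Theorem~\ref{t:PoA} yields $g_k\in\partial v(x_k)$ with
\[
g_k=\frac{1-(1-\alpha)x_k}{1-x_k}\,\lambda .
\]
This expression is increasing in $x_k$. Evaluating it at $x_k=1/n$ gives
\[
c\le g_k\le \frac{n-1+\alpha}{n-1}\,\lambda .
\]

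If $x_k=0$, agent $k$ cannot profit from a small positive bid. Taking the one-sided derivative of $u_k$ at $s_k=0$ gives $v'_+(0)\cdot\frac{1}{S}-\frac{1}{S^{1-\alpha}}\le 0$, where $S=\sum_j s_j$. This says $v'_+(0)\le\lambda$, so $c\le v'_+(0)\le\lambda$, which is even stronger.

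\emph{Conclusion.} In both cases $\RVCG=c\le \lambda\bigl(1+\frac{\alpha}{n-1}\bigr)$, which is the claim.

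I expect the main obstacle to be Step 2 in the non-differentiable setting. Asymmetric equilibria may exist there, and the sign of the concavity inequality must be handled carefully. Choosing the minimal-share agent is what makes the direction work: a supergradient at a point to the left of $1/n$ dominates the chord slope to its right. A further technical check is the boundary case $x_k=0$. There the one-sided optimality condition must be derived directly from the utility $u_k(s_k,\s_{-k})$ near $s_k=0$, because it does not follow from the interior formula.
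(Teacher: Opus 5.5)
Your argument is correct, and it takes a different route from the paper. The paper never computes $\RVCG$ directly. It uses the general bound of Theorem~\ref{thm:vcg_bound}, $\RVCG\le\sum_i g_{\max(-i)}-(n-1)\sum_i g_ix_i$, obtained by linearizing each $v_j$ at its equilibrium share and summing the VCG payments. It rewrites this through the equilibrium condition as Corollary~\ref{cor:revenue-general}. It then asserts that the equilibrium is symmetric, so $x_i=1/n$ makes the correction sum vanish and leaves the factor $1+\frac{\alpha}{n-1}$.

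You instead exploit identical valuations on the VCG side. Jensen gives $\RVCG$ exactly as the chord slope $c$ of $v$ on $[\frac1n,\frac1{n-1}]$, and the supergradient of a single minimum-share agent bounds $c$. The core inequality is the same in both proofs: a supergradient at a share of at most $1/n$ dominates that chord slope, and the equilibrium condition prices it at $\lambda\bigl(1+\alpha\frac{x}{1-x}\bigr)$. Your version, however, does not depend on the symmetry claim, and it treats the boundary case $x_k=0$ explicitly. The paper leaves symmetry unjustified outside the differentiable, strictly increasing case covered by Theorem~\ref{t:unique}. In return, the paper's machinery does not need identical valuations, and it is what yields Theorem~\ref{t:two-players}.

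Three small remarks.
\begin{itemize}
\item Asymmetric equilibria cannot actually occur here (with $S>0$). A zero-share agent needs $v'_+(0)\le\lambda$, while any positive-share agent has $\lambda<g_i\le v'_+(0)$. Two positive shares $x_i<x_j$ would need $g_i\ge g_j$ by concavity but $g_i<g_j$ by the equilibrium formula. So your tools also supply the missing justification for the paper's symmetry claim.
\item You use $v'_+$ for the right derivative, whereas the paper defines $v'_+(x)$ as the largest supergradient. Under the paper's convention, $g\ge v'_+(1/n)$ can fail at $y=1/n$, and $v'_+(0)$ is undefined. Your conclusion $g\ge c$ still follows directly, since $c\le\frac{v(1/(n-1))-v(y)}{1/(n-1)-y}\le g$ for $y\le 1/n$.
\item Your $x_k=0$ case divides by $S$, so you should note that $S>0$ at equilibrium. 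This holds whenever $v$ is non-decreasing with $v(1)>0$: from the all-zero profile, a lone small bid would win the whole good at a vanishing price.
\end{itemize}
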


\begin{restatable}{theorem}{theoremeight}
\label{t:two-players}
For $n=2$ agents with general concave valuations, $\RVCG \leq (1+\alpha)\Rev$\,. 
\end{restatable}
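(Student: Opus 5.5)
The plan is to compare the two revenues through the social welfare. With $n=2$, removing agent $i$ gives the whole good to the other agent, so $\opt^{-1}=(0,1)$ and $\opt^{-2}=(1,0)$. Hence
\[
\RVCG = \bigl(v_2(1)-v_2(o^\star_2)\bigr) + \bigl(v_1(1)-v_1(o^\star_1)\bigr) = v_1(1)+v_2(1)-\sw(\opt^\star).
\]
Let $\x$ be the allocation induced by a Nash Equilibrium $\s$, and write $\Rev=\Lambda:=(s_1+s_2)^{\alpha}$. Since $\sw(\opt^\star)\ge \sw(\x)$, I get
\[
\RVCG \le \bigl(v_1(1)-v_1(x_1)\bigr)+\bigl(v_2(1)-v_2(x_2)\bigr).
\]
This removes the optimal allocation from the picture entirely, so only equilibrium quantities remain to be bounded.

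Next I use the equilibrium first-order condition, exactly as in the proof of Theorem~\ref{t:PoA}. When both $x_1,x_2\in(0,1)$, there is a supergradient $g_i\in\partial v_i(x_i)$ with
\[
g_i=\Lambda\cdot\frac{1-(1-\alpha)x_i}{1-x_i},
\]
as in Equation~(\ref{eq:g_i}). Definition~\ref{d:1} is stated for all $y\in[0,1]$, so I may take $y=1$. With $x_1+x_2=1$ this gives
\[
v_1(1)-v_1(x_1)\le g_1(1-x_1)=\Lambda\bigl(1-(1-\alpha)x_1\bigr)=\Lambda(x_2+\alpha x_1),
\]
and symmetrically $v_2(1)-v_2(x_2)\le \Lambda(x_1+\alpha x_2)$. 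Summing the two bounds yields $\RVCG\le \Lambda(1+\alpha)(x_1+x_2)=(1+\alpha)\Rev$, which is the claim.

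The main obstacle is the boundary case where one agent, say agent $2$, gets $x_2=0$; then Equation~(\ref{eq:g_i}) is unavailable for agent $1$, since $x_1=1$. Here I would argue directly from the equilibrium definition.
\begin{itemize}
\item If $s_2=0$ and $s_1>0$, agent $1$ receives the whole good for any positive bid and pays $s_1^{\alpha}$. It can strictly gain by shrinking $s_1$, so this is not an equilibrium.
\item If both bids are $0$, I need the paper's convention for the all-zero profile. Under that convention I would check that an equilibrium forces the relevant marginal values to vanish, so that $v_1(1)$ and $v_2(1)$ are themselves $0$ or the inequality holds trivially.
\end{itemize}
I expect the interior argument above to be the substance of the proof and this degenerate case to need only a short remark.
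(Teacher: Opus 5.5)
Your proof is correct and is essentially the paper's argument specialized to $n=2$. Theorem~\ref{thm:vcg_bound} uses the same two ingredients, namely $\sw(\opt^\star)\ge\sw(\x)$ and the supergradient inequality at the allocation without agent $i$ (which here is $y=1$), to reach exactly your bound $g_1(1-x_1)+g_2(1-x_2)$, and Corollary~\ref{cor:revenue-general} then substitutes the same first-order condition, though your identity $g_i(1-x_i)=\Lambda\bigl(1-(1-\alpha)x_i\bigr)$ finishes more cleanly than the paper's algebra in $x_2/x_1$. Your boundary discussion goes beyond the paper, which tacitly assumes an interior equilibrium; to close the all-zero case, note that it is an equilibrium only if $v_1(1),v_2(1)\le 0$ (since deviating to $s_i>0$ yields $v_i(1)-s_i^{\alpha}$), and then $\RVCG\le v_1(1)+v_2(1)-\max_i v_i(1)=\min_i v_i(1)\le 0$ because $\sw(\opt^\star)\ge\max_i v_i(1)$.
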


Our results indicate that the $\alpha$-proportional mechanism produces $1/(1+\alpha)$ of the revenue produced by the VCG.

\section{Experimental Evaluations}\label{s:exp}
In this section, we experimentally evaluate the $\alpha$-proportional mechanism under classical no-regret algorithms. We consider each agent $i \in [n]$ selecting each strategy $\theta^t_i \geq 0$ at each round $t \geq 1$ via running the Hedge algorithm. We experimentally evaluate both the convergence properties of such no-regret dynamics to the unique Nash Equilibrium of the game as well as the produced social welfare.

In particular, we consider valuation functions of the form $v_i(x_i) := \zeta_i \cdot x_i^\beta$ where $\zeta_i \in \mathrm{Unif}([0,1])$. We conduct the experiment for $n =5$ and all values $\beta:= \{0.5 , 0.75, 1\}$. We discretize the action space of each agent to be $[0,1]$ with granularity $k = 0.01$ and plot the exploitability $\mathrm{exp}(t) := \max_{i \in [n]} \left(\max_{\theta_i}U_i(\theta_i,\theta^t_{-i}) - U_i(\theta^t_i,\theta^t_{-i}) \right)$. Our experimental evaluations presented in Figure~\ref{f:1} suggest that as $T$ increases, $\mathrm{exp}(t)$ approaches $0$.

\begin{figure}[t]
    \centering

    \begin{minipage}{0.32\textwidth}
        \centering
        \includegraphics[width=\linewidth]{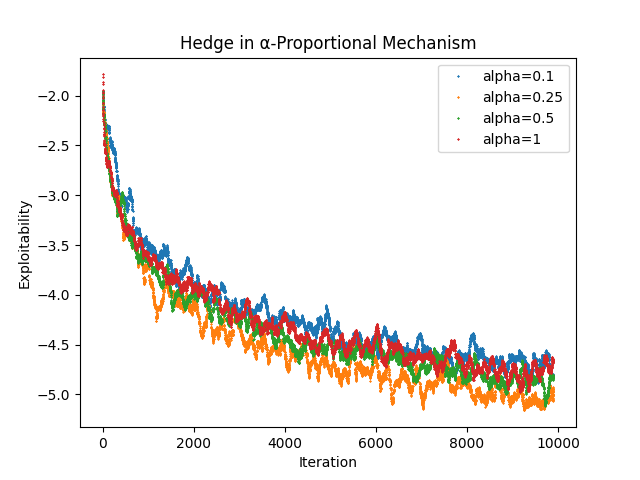}
        
        $\beta = 1$
    \end{minipage}
    \hfill
    \begin{minipage}{0.32\textwidth}
        \centering
        \includegraphics[width=\linewidth]{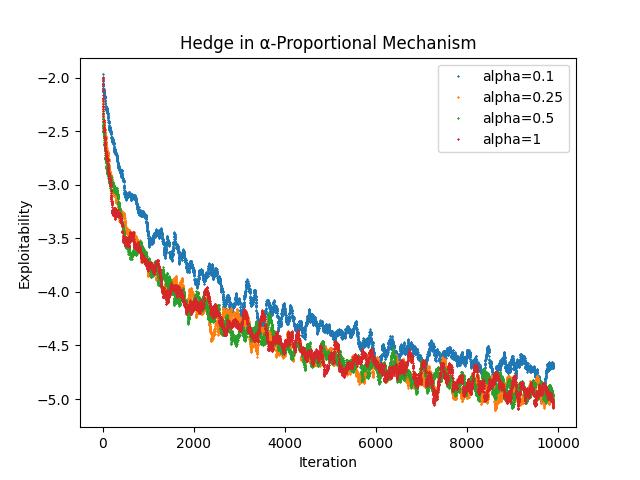}
        
        $\beta = 0.75$
    \end{minipage}
    \hfill
    \begin{minipage}{0.32\textwidth}
        \centering
        \includegraphics[width=\linewidth]{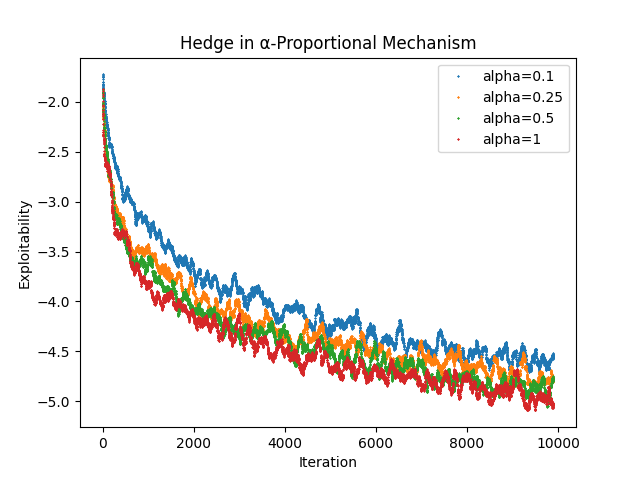}
        
        $\beta = 0.5$
    \end{minipage}

    \caption{Exploitability under Hedge dynamics for different values of $\beta$.}
    \label{f:1}
\end{figure}

Next, we evaluate the average social welfare produced by the online learning dynamics for various values of $\alpha \in (0,1]$ and number of agents $n = \{2,3,5,10\}$. In Table~$2$ we consider the case where $v_i(x_i) := \zeta_i \cdot x_i^\beta$ where $\zeta_i \in \mathrm{Unif}([0,1])$. In such cases where the valuation functions come from the same distribution, our experimental evaluations suggest that all values of $\alpha \in \{0.1,0.25,0.5,1\}$ produce very similar social welfare. The latter is reasonable since, in the case of similar valuation functions, the allocation maximizing social welfare allocates the good roughly equally among the agents. The latter is a property that can be attained by both small and large values of $\alpha$. On the opposite side of the spectrum, in case the valuation function comes from different distributions, the optimal allocation needs to assign most of the items to a single agent. In such a case, we would expect smaller values of $\alpha$ leading to higher social welfare. The latter intuition is verified by our experimental evaluations presented in Table~\ref{tab:welfare2}. Notice that in this case, smaller values of $\alpha$ (e.g., $0.1$ or $0.25$) lead to significantly higher social welfare.

\begin{table*}[t]
\centering

\begin{minipage}{0.45\textwidth}
\centering
\small
\setlength{\tabcolsep}{4pt}

\begin{tabular}{|c|c||c|c|c|c|}
\hline
$\beta$ & $\alpha$ & $n=2$ & $n=3$ & $n=5$ & $n=10$ \\
\hline

\multirow{4}{*}{$1$}
& $0.1$  & 0.886 & 0.707 & 0.712 & 0.804 \\ \cline{2-6}
& $0.25$ & 0.887 & 0.707 & 0.724 & 0.810 \\ \cline{2-6}
& $0.5$  & \textbf{0.891} & \textbf{0.712} & 0.710 & 0.800 \\ \cline{2-6}
& $1$    & 0.886 & 0.711 & \textbf{0.737} & \textbf{0.814} \\
\hline

\multirow{4}{*}{$0.75$}
& $0.1$  & \textbf{0.574} & 0.737 & 1.008 & 1.272 \\ \cline{2-6}
& $0.25$ & 0.562 & \textbf{0.743} & 1.002 & 1.291 \\ \cline{2-6}
& $0.5$  & 0.562 & 0.740 & 1.000 & 1.300 \\ \cline{2-6}
& $1$    & 0.563 & 0.706 & \textbf{1.015} & \textbf{1.305} \\
\hline

\multirow{4}{*}{$0.5$}
& $0.1$  & 0.652 & 1.203 & 1.191 & 1.979 \\ \cline{2-6}
& $0.25$ & 0.645 & 1.194 & 1.189 & \textbf{1.988} \\ \cline{2-6}
& $0.5$  & 0.642 & 1.185 & \textbf{1.209} & 1.974 \\ \cline{2-6}
& $1$    & \textbf{0.654} & \textbf{1.214} & 1.167 & 1.982 \\
\hline

\end{tabular}

\caption{Each valuation function }$v_i(x_i) := \zeta_i \cdot x_i^\beta$ where $\zeta_i \in \mathrm{Unif}([0,1])$.
\label{tab:welfare1}
\end{minipage}
\hfill
\begin{minipage}{0.45\textwidth}
\centering
\small
\setlength{\tabcolsep}{4pt}

\begin{tabular}{|c|c||c|c|c|c|}
\hline
$\beta$ & $\alpha$ & $n=2$ & $n=3$ & $n=5$ & $n=10$ \\
\hline

\multirow{4}{*}{$1$}
& $0.1$  & 0.911 & \textbf{0.898} & \textbf{0.948} & \textbf{0.922} \\ \cline{2-6}
& $0.25$ & \textbf{0.956} & 0.898 & 0.901 & 0.877 \\ \cline{2-6}
& $0.5$  & 0.918 & 0.831 & 0.847 & 0.807 \\ \cline{2-6}
& $1$    & 0.829 & 0.718 & 0.639 & 0.504 \\
\hline

\multirow{4}{*}{$0.75$}
& $0.1$  & 0.900 & \textbf{0.960} & \textbf{0.911} & \textbf{0.919} \\ \cline{2-6}
& $0.25$ & \textbf{0.931} & 0.944 & 0.882 & 0.859 \\ \cline{2-6}
& $0.5$  & 0.904 & 0.865 & 0.833 & 0.815 \\ \cline{2-6}
& $1$    & 0.852 & 0.838 & 0.775 & 0.622 \\
\hline

\multirow{4}{*}{$0.5$}
& $0.1$  & 0.855 & \textbf{0.881} & \textbf{0.898} & \textbf{0.918} \\ \cline{2-6}
& $0.25$ & \textbf{0.898} & 0.864 & 0.878 & 0.887 \\ \cline{2-6}
& $0.5$  & 0.876 & 0.863 & 0.837 & 0.825 \\ \cline{2-6}
& $1$    & 0.849 & 0.778 & 0.815 & 0.742 \\
\hline

\end{tabular}

\caption{$v_i := \zeta_i \cdot x_i^\beta$ where $\zeta_1 \in \mathrm{Unif}([0.9,1])$ and $\zeta_i \in \mathrm{Unif}([0,0.1])$ for $i \neq 1$.}
\label{tab:welfare2}
\end{minipage}
\end{table*}

\section{Conclusion}
We introduce a unified framework for designing simple resource allocation mechanisms with proportional-style allocations and uniform pricing. Our framework yields the family of $\alpha$-proportional mechanisms, which interpolate between the Kelly mechanism and the first-price auction. We established the existence and uniqueness of Nash equilibria, proved improved efficiency guarantees over the Kelly mechanism, and showed that the Price of Anarchy approaches $1$ as $\alpha \rightarrow 0$. An interesting future research direction is in establishing Liquid Price of Anarchy guarantees in the presence of budget-constrained bidders, as well as considering more general polytope feasibility constraints.

\textbf{Aknowledgements} 
This work was partially supported by the Independent Research Fund Denmark (DFF) under grant 2032-00185B, the Villum Young Investigator Award, project number:~72091, and by project MIS 5154714 of the National Recovery and
Resilience Plan Greece 2.0 funded by the European Union under the NextGenerationEU Program.

\bibliography{refs.bib}
\bibliographystyle{plain}


\newpage\appendix

\section{Related Work}\label{app:related}
In Table~\ref{t:1}, we summarize the properties of the $\alpha$-proportional mechanism, the mechanism of Johari and Tsitsiklis~\cite{JT04} and the ESPA mechanism~\cite{MB06}.
\renewcommand{\arraystretch}{1.4}
\begin{table}[h]
\centering
\scriptsize
\setlength{\tabcolsep}{4pt}
\renewcommand{\arraystretch}{1.1}

\resizebox{\textwidth}{!}{%
\begin{tabular}{|l|c|c|c|c|c|c|}
\hline
\textbf{Name} 
& \textbf{Bidding Space}
& \textbf{Allocation} 
& \textbf{Pricing} 
& \textbf{Best NE Eff.} 
& \textbf{Worst NE Eff.} 
& \textbf{Unit Price} \\
\hline

VCG 
& $\tilde v_i : [0,1] \to \mathbb{R}$ 
& $\arg\max_{\mathbf{x}} \sum_i \tilde v_i(x_i)$ 
& externality payment 
& $1$ 
& $0$ 
& non-uniform \\
\hline

Scalar VCG 
& $\theta_i \in \mathbb{R}_{\ge 0}$ 
& $\arg\max_{\mathbf{x}} \sum_i \theta_i x_i^\alpha$ 
& externality payment 
& $1$ 
& $0$ 
& non-uniform \\
\hline

Kelly 
& $\theta_i \in \mathbb{R}_{\ge 0}$ 
& $\displaystyle x_i = \frac{\theta_i}{\sum_j \theta_j}$ 
& $\displaystyle p_i = \theta_i$ 
& $3/4$ 
& $3/4$ 
& uniform \\
\hline

$\alpha$-Prop. 
& $\theta_i \in \mathbb{R}_{\ge 0}$ 
& $\displaystyle x_i = \frac{\theta_i^{1/\alpha}}{\sum_j \theta_j^{1/\alpha}}$ 
& $\displaystyle p_i = \frac{\theta_i^{1/\alpha}}{\left(\sum_j \theta_j^{1/\alpha}\right)^{1-\alpha}}$ 
& $\frac{1 + 2\sqrt{\alpha}}{ (1 + \sqrt{a})^2}$ 
& $\frac{1 + 2\sqrt{\alpha}}{ (1 + \sqrt{a})^2}$
& uniform \\
\hline

ESPA 
& $\theta_i \in \mathbb{R}_{\ge 0}$ 
& $\displaystyle x_i = \frac{\theta_i}{\sum_j \theta_j}$ 
& $\begin{aligned}
p_i &= \left(\sum_{j\neq i} \theta_j\right) \\
&\quad \cdot \int_0^{\theta_i}
\frac{g(t+\sum_{j\neq i}\theta_j)}
{(t+\sum_{j\neq i}\theta_j)^2}\,dt
\end{aligned}$ 
& $1$ 
& $1$ 
& non-uniform \\
\hline

\end{tabular}%
}
\caption{Comparison of resource allocation mechanisms.}
\label{t:1}
\end{table}

\begin{remark}
Apart from its simplicity, an important advantage of the $\alpha$-proportional mechanism with respect ESPA and Scalar VCG is that it uses a fixed unit price. More precisely, this means that $p_i(\theta)/x_i(\theta)$ remains exactly the same for each agent $i \in [n]$. Fixed unit price is a very crucial property for fairness and regulatory reasons and we remark that $\alpha$-proportional mechanism is the first mechanism able to achieve arbitrarily high efficiency with fixed unit price.
\end{remark}

\begin{remark}
In case of $g(p) = p$ then the payment of ESPA mechanism~\cite{MB06} becomes $p_i(\theta_i,\theta_{-i}) = (\sum_{j \neq i} 
\theta_j) \ln(1 + s_i / (\sum_{j \neq i} s_j))$. 
\end{remark}

\section{Omitted Proofs of Section~\ref{s:framework}}\label{app:framework}
\lemmatwo*
\begin{proof}
Our mechanism solves the following optimization problem that maximizes the social welfare with respect to the proxy functions.
\[
\begin{aligned}
\max \quad & \sum_{i = 1}^n  \theta_i\cdot  \frac{x_i^{1-\alpha}}{1-\alpha} \\
\text{s.t.} \quad & \sum_i x_i \leq 1 \\
                  & x_i \ge 0 \quad \forall i \in [n]
\end{aligned}
\]
By taking the Lagrangian of the problem, we get
\[
L = \sum_i \theta_i \cdot \frac{x_i^{1-\alpha}}{1-\alpha} 
- \lambda \left(\sum_i x_i - 1\right) + \sum_{i=1}^n x_i \cdot k_i
\]
The value of the dual variable $\lambda$ will be the unit price of the good, which means that the payment of each agent equals $\lambda \cdot x_i$. By taking the partial derivatives with respect to each $i \in [n]$, we get
\[
\frac{\partial L}{\partial x_i} = \theta_i x_i^{-\alpha} - \lambda = 0
\]
\noindent which, in turn, implies
\[
x_i = \left(\frac{\theta_i}{\lambda}\right)^{\frac{1}{\alpha}} \text{ and } \lambda = \left(\sum_i \theta_i^{\frac{1}{\alpha}}\right)^\alpha
\]
So, the mechanism above can be equivalently described as each agent reporting a parameter $\theta_i \geq 0$ and then the allocation $x_i$ of each agent $ i \in [n]$ equals
\[
x_i = \frac{\theta_i^{\frac{1}{\alpha}}}{\sum_j \theta_j^{\frac{1}{\alpha}}} \text{ for some } \alpha \geq 1
\]
and the utility of each agent $i \in [n]$ equals,
\[u_i(\theta_i,\theta_{-i}) := v_i\left(\frac{\theta_i^{\frac{1}{\alpha}}}{\sum_j \theta_j^{\frac{1}{\alpha}}} \right) - \lambda \cdot \frac{\theta_i^{\frac{1}{\alpha}}}{\sum_j \theta_j^{\frac{1}{\alpha}}} = v_i\left(\frac{\theta_i^{\frac{1}{\alpha}}}{\sum_j \theta_j^{\frac{1}{\alpha}}} \right) -  \frac{\theta_i^{\frac{1}{\alpha}}}{\left(\sum_j \theta_j^{\frac{1}{\alpha}}\right)^{1-\alpha}}.\]
\end{proof}

\lemmaone*
\begin{proof}[Proof of Lemma~\ref{l:1}]
Consider the utility of agent $i \in [n]$ with respect to the proxy function $\tilde{v}_i(x_i)$. In this case, the utility is
\[\theta_i\cdot  \frac{x_i^{1-\alpha}}{1-\alpha} - \lambda \cdot x_i.\]
Notice that $\lambda = \theta_i \cdot x_i^{-\alpha}$ (see proof of Lemma~\ref{l:2} above). As a result, 
\[x_i := \mathrm{argmax}_{x_i \geq 0}~\left\{ \theta_i \cdot \frac{x_i^{1 - \alpha}}{1 - \alpha} - \lambda \cdot x_i\right\}.\]
\end{proof}

\section{Omitted Proofs of Section~\ref{s:NE}}\label{app:nash}

\lemmathree*
\begin{proof}
Let a strategy profile $\textbf{s} = (s_1,\ldots,s_n)$ and consider $S := \sum_{i=1}^n s_i$. Then, the $\alpha$-proportional mechanism 
allocates to each agent $i \in [n]$
\[
x_i(\s) =
\begin{cases}
s_i / S & \text{if } S > 0,\\
0 & \text{if } S = 0,
\end{cases}
\]
and charges payment
\[
p_i(\s) =
\begin{cases}
s_i / S^{1-\alpha}& \text{if } S > 0,\\
0 & \text{if } S = 0.
\end{cases}
\]
By the definition of $x_i(\s)$ and $p_i(\s)$, their gradients with 
respect to $s_i$ are as follows:
\begin{equation}\label{eq:a}
\frac{\partial x_i}{\partial s_i} := \frac{\sum_{j\neq i} s_j}{ S^2} = \frac{S - s_i}{S^2} = \frac{1 - x_i}{S}
\end{equation}

\begin{equation}\label{eq:b}
\frac{\partial p_i}{\partial s_i} := (\alpha - 1)S^{\alpha - 2}s_i + S^{\alpha 
- 1} = S^{\alpha - 1} \cdot( 1 - (1 - \alpha)x_i)
\end{equation}

Using Equations~\ref{eq:a} and ~\ref{eq:b}, we get the following corollary for the supergradients of $u_i(s_i,s_{-i})$. 

\begin{corollary}\label{l:chain}
Let $g_i \in \partial v_i(x_i)$ be a superdifferential of $v_i(\cdot)$ at $x_i$. Then 
\[ g_i \cdot \frac{1 - x_i}{S} - S^{\alpha - 1}\bigl(1 - (1-\alpha)x_i\bigr)\]
is a superdifferential of the function $u_i(\cdot,\s_{-i})$ at point $s_i$. 
\end{corollary}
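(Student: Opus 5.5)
The statement to prove is Corollary~\ref{l:chain}: if $g_i\in\partial v_i(x_i)$, then $g_i\cdot\frac{1-x_i}{S}-S^{\alpha-1}\bigl(1-(1-\alpha)x_i\bigr)$ is a supergradient of $u_i(\cdot,\s_{-i})$ at $s_i$. Since $u_i(\cdot,\s_{-i})$ need not be concave in $s_i$, I read ``superdifferential'' here as the one-sided (Fréchet/Clarke-type) generalized gradient. Concretely, the goal is to show that every one-sided directional derivative of $u_i$ at $s_i$ is attained by the chain rule with some $g_i\in\partial v_i(x_i)$, and that the stated expression is exactly the chain-rule composite.

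Fix $\s_{-i}$ with $T:=\sum_{j\ne i}s_j>0$, so that $S=s_i+T>0$ in a neighborhood of $s_i$. Then $s_i\mapsto x_i(s_i)=s_i/(s_i+T)$ is smooth and strictly increasing, with derivative $(1-x_i)/S$ by Equation~\eqref{eq:a}. The payment $p_i$ is smooth with derivative $S^{\alpha-1}(1-(1-\alpha)x_i)$ by Equation~\eqref{eq:b}. So the only nonsmooth piece of $u_i=v_i\circ x_i-p_i$ is the concave function $v_i$.

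\textbf{Step 1: directional derivatives of $v_i$.} For concave $v_i$ and $x_i\in(0,1)$, the one-sided derivatives exist, with $v'_{i+}(x)\le v'_{i-}(x)$ in the usual right/left convention, and $\partial v_i(x_i)=[v'_{i,\mathrm{right}}(x_i),\,v'_{i,\mathrm{left}}(x_i)]$. Because $x_i(\cdot)$ is $C^1$ with positive derivative, the chain rule for one-sided derivatives gives
\[
\lim_{h\downarrow0}\frac{u_i(s_i+h)-u_i(s_i)}{h}=v'_{i,\mathrm{right}}(x_i)\frac{1-x_i}{S}-\frac{\partial p_i}{\partial s_i},
\]
and analogously for $h\uparrow 0$ with $v'_{i,\mathrm{left}}$. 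The monotonicity of $x_i(\cdot)$ is what makes increasing $s_i$ correspond to increasing $x_i$, so no sign flip occurs.

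\textbf{Step 2: assemble the superdifferential.} Consequently, the set of values $g_i\frac{1-x_i}{S}-S^{\alpha-1}(1-(1-\alpha)x_i)$ with $g_i\in\partial v_i(x_i)$ is the interval between the right and left derivatives of $u_i$ at $s_i$. Moreover, $u_i(s_i+h)\le u_i(s_i)+h\cdot(\text{that value})+o(h)$ for each such value. The right-derivative bound handles $h>0$ and the left-derivative bound handles $h<0$, using that each endpoint dominates the appropriate side. This is precisely the local supergradient inequality, which is exactly what the subsequent argument needs: at a maximizer $s_i>0$, some element of this set equals zero, and at $s_i=0$ the right derivative is nonpositive. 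That yields Equation~\eqref{eq:g_i}.

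\textbf{Main obstacle: boundary cases.} Three cases need separate care: $x_i\in\{0,1\}$, $s_i=0$, and $T=0$. For $s_i=0$ with $T>0$, only $h>0$ is relevant. The relevant derivative is $v'_{i+}(0)\cdot\frac{1}{T}-T^{\alpha-1}$, and here one must allow $v'_{i+}(0)=+\infty$, treating the supergradient set as $[v'_{i+}(0),\infty)$. For $T=0$, $x_i$ jumps to $1$ for any $s_i>0$ and the formula is not used. I would state that explicitly and restrict the corollary to $S>0$, $T>0$.
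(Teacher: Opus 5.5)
Your proof is correct and takes the same route as the paper, whose entire argument is ``by the chain rule'' applied to $v_i\circ x_i$ together with Equations~\eqref{eq:a} and~\eqref{eq:b}; your one-sided-derivative computation is the careful version of that step. Reading $\partial u_i$ as a local (Fr\'echet-type) superdifferential is necessary, not pedantic: $u_i(\cdot,\s_{-i})$ can be strictly convex near some points (e.g.\ $v_i(x)=x$, $\alpha=1/2$, $\sum_{j\ne i}s_j=1$, $s_i=3$), where the global inequality of Definition~\ref{d:1} fails, and the local version is exactly what the Fermat-rule steps in Theorem~\ref{t:main} and Theorem~\ref{t:PoA} use.
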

\begin{proof}
Since  $u_i(s_i,s_{-i}) := v_i\left(s_i/S\right) - s_i/\left(S \right)^{1-\alpha}$ by the chain rule we get that for all $g_i \in \partial v_i(x_i^\star)$,
\[ g_i \cdot \frac{\partial x_i}{\partial s_i} - \frac{\partial p_i}{\partial s_i} \in \partial u_i(s_i,\s_{-i}).\]
Then by Equations~\ref{eq:a} and~\ref{eq:b} we get 

\[ g_i \cdot \frac{1 - x^\star_i}{S} - S^{\alpha - 1}\bigl(1 - (1-\alpha)x^\star_i\bigr) \in \partial_{s_i} u_i(s^\star_i,\s^{\star}_{-i}).\]
\end{proof}
\medskip
\medskip

Let $\x^\star \in \Delta_n$ be a maximizer of $\Psi(\x):=
\sum_{i=1}^n\int_0^{x_i} v'_{i+}(z) \cdot \frac{1 - z}{1 - (1-\alpha)z}dz$. Since $\Psi(\x)$ is differentiable and concave by the KKT conditions, we have that there exist $\mu \geq 0$ and $\lambda_i \geq 0$ for each $ i \in [n]$ such that
\[
0 =  \frac{\partial \Psi(\mathbf{x})}{\partial x_i} - \mu + \lambda_i = 0~~~~~~\text{for each } i \in [n].\]
Then, by Corollary~\ref{l:chain} we get that, for all $i \in [n]$,
\[
0 =  v'_{i+}(x^\star_i) \cdot \frac{1 - x^\star_i}{1 - (1-\alpha)x^\star_i} - \mu + \lambda_i.\]
\medskip

By the complementary slackness conditions, we have that for each $i \in [n]$,
\[
\lambda_i \cdot x^\star_i = 0.
\]
As a result, in case $x^\star_i > 0$ then $\lambda_i = 0$ so
\begin{equation}\label{eq:13}
\mu = v'_{i+}(x^\star_i) \cdot \frac{1 - x^\star_i}{1 - (1-\alpha)x^\star_i}.
\end{equation}
If $x^\star_i = 0$ then $v'_{i+}(0) - \mu + \lambda_i = 0$ for some $\lambda_i \ge 0$, which is equivalent to $
\mu \ge \sup \partial v_i(0).$

The latter completes the proof of Item~$1$ of Theorem~\ref{t:main}.

\medskip

Next, we establish Item~$2$ in Theorem~\ref{t:main}. We consider $S = \mu^{1/\alpha}$ and we will establish that the strategy profile defined as
\[ s^\star_i := S \cdot x^\star_i\]
forms a Nash Equilibrium. We will establish the latter claim by separately considering the cases of $x^\star_i > 0$ and $x_i^\star = 0$.

\noindent\emph{Case 1: $x^\star_i > 0$.} In this case, we have $s^\star_i >0$. We will establish that $s^\star_i \in \mathrm{argmax}_{s_i} u_i(s_i,\s^\star_{-i})$.

By Equation~\ref{eq:13} we get that,
\[
\mu = v'_{i+}(x^\star_i) \cdot  \frac{1 - x^\star_i}{1 - (1-\alpha)x^\star_i}.
\]
and by rearranging and setting $S := \mu^{1/\alpha}$ we get  
\begin{equation}\label{eq:11}
v'_{i+}(x^\star_i) \cdot \frac{1 - x^\star_i}{S}
= S^{\alpha - 1}\bigl(1 - (1-\alpha)x^\star_i\bigr).
\end{equation}

Together with Lemma~\ref{l:chain}, Equation~\ref{eq:11} ensures that 
$0 \in \partial_{s_i} u_i(\s^\star)$. In case the functions $u_i(s_i,\s^\star_{-i})$ were concave with respect to $s_i \geq 0$. The latter would directly imply that $s_i^\star := \mathrm{argmax}_{s_i} u_i(s_i,\s^\star_{-i})$.

Despite the fact that the function $u_i(s_i,\s_{-i})$ is not concave with respect to $s_i$, we establish that for any $s_i \neq s_i^\star$,
\begin{equation}\label{eq:not_zero}
   0 \notin \partial u_i(s_i,\s^\star_{-i}) 
\end{equation}
and thus $s^\star_i := \mathrm{argmax}_{s_i} u_i(s_i,\s^\star_{-i})$.
\medskip

We complete this part of the proof by establishing Equation~\ref{eq:not_zero}. Notice that $s^\star_i = x^\star_i \cdot S$ and $S := \sum_{j=1}^n s_j$. As a result, Equation~\ref{eq:11} can be written equivalently as 
\[
\frac{1}{S^{2}}\left( v'_{i+}(x^\star_i) \cdot \sum_{j \neq i} s^\star_j - S^{\alpha}\left(\sum_{j \neq i} s^\star_j + \alpha \cdot s^\star_i \right) 
\right) = 0.
\]
The latter in turn implies that
\begin{equation}\label{eq:12}
v'_{i+}(x^\star_i) \cdot \sum_{j \neq i} s^\star_j - S^{\alpha}\left(\sum_{j \neq i} s^\star_j + \alpha \cdot s^\star_i \right) = 0. \end{equation}
To this end, let us assume that $s'_i := \mathrm{argmax}_{s_i} u_i(s_i,\s^\star_{-i})$ and $s'_i \neq s^\star_i$. To simplify notation, let
\[ S' = s'_i + \sum_{j \neq i}s^\star_j ~~~~~\text{ and }~~~~~ x'_i = \frac{s'_i}{S'}.\]
Let $s'_i$ such that $0 \in \partial_{s_i} u_i(s'_i,\s^\star_{-i})$. Then corollary~\ref{l:chain} implies that there exists $g'_i \in \partial v_i(x'_i)$ such that
\[
g'_i \cdot \frac{1 - x'_i}{S'}
- S^{'\alpha - 1}\bigl(1 - (1-\alpha)x'_i\bigr) = 0.
\]
which in turn implies that 
\begin{equation}\label{eq:asd}
g'_i \cdot \sum_{j \neq i} s^\star_j - S^{'a}\left(\sum_{j \neq i} s_j + \alpha \cdot s'_i \right) = 0.    
\end{equation}

We will show that Equation~\ref{eq:asd} leads to a contradiction. Let $s'_i > s^\star_i$ then $x'_i > x^\star_i$ and $S' > S$. Since $v_{i}(x_i)$ is a concave function, the fact that $x'_i > x^\star_i$
implies that $g'_i \leq v'_{i+}(x^\star_i)$. Thus,
\[g'_i \cdot \sum_{j \neq i} s^\star_j - S^{'a}\left(\sum_{j \neq i} s^\star_j + \alpha \cdot s'_i \right) < v'_{i+}(x^\star_i) \cdot \sum_{j \neq i} s^\star_j - S^{\alpha}\left(\sum_{j \neq i} s^\star_j + \alpha \cdot s^\star_i \right) = 0.\]
The latter contradicts Equation~\ref{eq:asd}. Symmetrically for the case $s'_i < s_i $.

\medskip

\noindent\emph{Case 2: $x^\star_i = 0$.} In this case, $\mu \ge \sup \partial v_i(0)$ and thus $s_i^\star = 0$ is the maximizer of $u_i(s_i,\s^\star_{-i})$.
\end{proof}

\lemmafour*
\begin{proof}
Let $s^\star = (s_1^\star,\ldots, s_n^\star)$ a Nash Equilibrium and $S = \sum_{j=1}^n s^\star_j$. In case $x_i^\star >0$ we have that $\frac{\partial u_i(\s^\star)}{\partial s_i} = 0$ which by Corollary~\ref{l:chain} implies that
\[ v'_i(x_i^\star) \cdot \frac{1 - x^\star_i}{S} = S^{\alpha - 1}\bigl(1 - (1-\alpha)x^\star_i\bigr).\]
that in turn implies that
\[ v'_i(x_i^\star) \cdot \frac{1 - x^\star_i}{\bigl(1 - (1-\alpha)x^\star_i\bigr)} = S^{\alpha}.\]
Setting $\mu := S^\alpha$ we get the claim for $x^\star_i >0$. Now in case $x_i^\star = 0$ we have that
\[ v'_i(0) \cdot \frac{1}{S} < S^{\alpha - 1}\]
which implies the claim for $x_i^\star = 0$. Then the KKT conditions directly imply that $\x^\star \in \Delta_n$ maximizes $\Psi(\x)$. 
\end{proof}


\section{Omitted Proofs of Section~\ref{s:PoA}}

\theoremfive*
\begin{proof}
Consider a resource allocation instance with $n$ agents. Agent $1$ has valuation function $v_1(x)=x$; all other agents have the same valuation function $v_2(x)=\gamma\cdot x$ for 
$$\gamma=\frac{(n\, (1+\sqrt{\alpha})-1-2\sqrt{\alpha}+\alpha}{(n\, (1+\sqrt{\alpha})-1-2\sqrt{\alpha})\cdot (1+\sqrt{\alpha})}.$$

Notice that the strategy profile inducing the allocation $\x$ with $x_1=\frac{1}{1+\sqrt{\alpha}}$ and $x_i=\frac{\sqrt{\alpha}}{(n-1)\cdot (1+\sqrt{\alpha})}$ for $i=2, ..., n$ is a Nash equilibrium. Notice that for $\mu := 1/ (1 + \sqrt{\alpha})$
\begin{align*}
    \frac{1-x_1}{1-(1-\alpha)\cdot x_1} = \frac{1}{1+\sqrt{\alpha}} := \mu 
\end{align*}
and for any agent $i > 1$, 
\begin{align*}
\gamma\cdot \frac{1-x_i}{1-(1-\alpha)\cdot x_i} = \frac{1}{1+\sqrt{\alpha}} = :\mu
\end{align*}
This means that Item~$1$ of Theorem~\ref{t:main} is satisfied and thus $\s$ is a Nash Equilibrium.

The social welfare of $\x$ is 
\begin{align*}
    \frac{1}{1+\sqrt{\alpha}}+\gamma\cdot \frac{\sqrt{\alpha}}{1+\sqrt{\alpha}},
\end{align*}
which approaches $\frac{1+2\sqrt{\alpha}}{(1+\sqrt{\alpha})^2}$ from above as $n$ approaches infinity. Also, note that the definition of $\gamma$ guarantee that $\gamma<1$ for all $\alpha\in (0,1)$ and $n\geq 2$. Hence, the optimal social welfare is $1$ (giving the whole resource to player $1$), and the price of anarchy approaches $\frac{(1+\sqrt{\alpha})^2}{1+2\sqrt{\alpha}}$ from below as $n$ approaches infinity.
\end{proof}

\section{Omitted Proofs of Section~\ref{s:rev}}
\theoremsix*	
	\begin{proof}
	\textit{VCG Revenue for Linear Valuations.}
	Assume the players are ordered such that $\zeta_1 \ge \zeta_2 \ge \dots \ge \zeta_n \geq 0$. In this case, the VCG mechanism acts as a second-price auction. It produces the optimal allocation $\vec{x}^*$, where $x_1^* = 1$ and $x_i^* = 0$ for all $i > 1$. The only player with positive payment is player $1$ whose payment is $\zeta_2$. Hence, $\RVCG = \zeta_2$. 
	
	\smallskip\noindent\textit{Revenue of $\alpha$-Proportional in Equilibrium.}
	The revenue of the $\alpha$-proportional mechanism is the equilibrium price $\lambda$. For linear valuations, each player's subgradient at the equilibrium allocation $\vec{x}$ is the constant coefficient, i.e., we have that $g_i = \zeta_i$ for all players $i$. The equilibrium condition for any player $i$ is:
	\[ \zeta_i = \lambda \left( 1 + \alpha \frac{x_i}{1-x_i} \right) \]
	Applying this to the second player ($i=2$), we conclude that: 
	\[
		\zeta_2  =  \lambda \left( 1 + \alpha \frac{x_2}{1-x_2} \right) 
		       \leq \lambda (1+\alpha)\,,
	\]
	where the inequality follows from the fact that the second player's allocation is $x_2 \leq \frac{1}{2}$, because $x_2 \leq x_1$, since $\zeta_1 \geq \zeta_2$, and $ x_1+x_2 \le 1$. Therefore, $\frac{x_2}{1-x_2} \leq 1$. 
	\end{proof}

	\begin{theorem}\label{thm:vcg_bound}
	Let each player $i$ have a concave valuation $v_i: [0,1] \to \mathbb{R}_{\geq 0}$, let $\vec{x} = (x_1, \ldots, x_n)$ be the equilibrium allocation of $\alpha$-proportional and let $g_i \in \partial v_i(x_i)$ be a subgradient of player $i$'s valuation at equilibrium. Then, 
	\[ \RVCG \leq \sum_{i=1}^n \sum_{j \neq i} g_j \big(y_j^{-i} - x_j\big) 
	         \leq \sum_{i=1}^n g_{\max(-i)} - (n-1) \sum_{i=1}^n g_i x_i \,,
	\] 
	where $g_{\max(-i)} = \max_{j\neq i} \{ g_j \}$ and $\vec{y}^{-i}$ is the optimal allocation with player $i$ absent. 
	\end{theorem}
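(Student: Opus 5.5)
The plan is to bound each VCG payment separately and then sum over $i$. Agent $i$'s VCG payment is $p_i = \sum_{j\neq i} v_j(y^{-i}_j) - \sum_{j\neq i} v_j(o^\star_j)$. Since $\opt^\star$ maximizes total welfare and every $v_i$ is non-negative, I would lower-bound $\sum_{j\neq i} v_j(o^\star_j)$ by comparing $\opt^\star$ with the equilibrium allocation $\x$. The relevant fact is that $\sum_{j} v_j(o^\star_j) \geq \sum_j v_j(x_j)$, but this controls the full sum rather than the sum without agent $i$. So the first step needs care.

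For the first inequality, concavity gives $v_j(y^{-i}_j) \leq v_j(x_j) + g_j(y^{-i}_j - x_j)$ for each $j$, where $g_j\in\partial v_j(x_j)$. Summing over $j\neq i$ yields
\[ p_i \leq \sum_{j\neq i} g_j\bigl(y^{-i}_j - x_j\bigr) + \Bigl(\sum_{j\neq i} v_j(x_j) - \sum_{j\neq i} v_j(o^\star_j)\Bigr). \]
I then need the bracketed term to be nonpositive once summed over $i$, or to handle it some other way. The obstacle is that $\sum_{j\neq i} v_j(o^\star_j)$ can be smaller than $\sum_{j\neq i} v_j(x_j)$ for a single $i$. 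My first attempt would be the cruder, standard bound $\sum_{j\neq i}v_j(o^\star_j)\ge \sum_{j\neq i} v_j(x_j)$, obtained by using that $\sum_{j\ne i} v_j(o^\star_j) = \max$ welfare minus $v_i(o^\star_i)$. This does not immediately close.

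The alternative I would try is to sum over $i$ first. The sum of the brackets is $(n-1)\bigl(\sw(\x)-\sw(\opt^\star)\bigr)$, since each $j$ appears $n-1$ times. This is $\le 0$ by optimality of $\opt^\star$. That removes the problematic term globally and gives exactly the first claimed inequality $\RVCG \le \sum_i\sum_{j\neq i} g_j(y^{-i}_j-x_j)$. I expect this summation trick to be the key step, and it is where I would check that no per-agent nonnegativity is required.

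For the second inequality, I would split $\sum_{j\neq i} g_j y^{-i}_j \le g_{\max(-i)}\sum_{j\neq i} y^{-i}_j \le g_{\max(-i)}$, using $y^{-i}$ feasible in the simplex (sum at most $1$) and $g_j\ge 0$. Nonnegativity of $g_j$ holds if the valuations are non-decreasing, as assumed in the PoA section. More robustly, it holds because equation (\ref{eq:g_i}) shows $g_j>0$ at equilibrium. The remaining term $-\sum_i\sum_{j\neq i} g_j x_j$ equals $-(n-1)\sum_j g_j x_j$ by double counting, which gives the stated bound.
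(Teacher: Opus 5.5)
Your argument is correct and is essentially the paper's proof: add and subtract $\sum_{j\neq i}v_j(x_j)$ in each VCG payment, apply the supergradient inequality, and sum over $i$ so the leftover term becomes $(n-1)\bigl(\sw(\x)-\sw(\opt^\star)\bigr)\le 0$; then finish with $\sum_{j\neq i}g_jy^{-i}_j\le g_{\max(-i)}$ and double counting. Delete the per-agent detour, since $\sum_{j\neq i}v_j(o^\star_j)\ge\sum_{j\neq i}v_j(x_j)$ is false (e.g.\ for the agent who gets everything in $\opt^\star$ under linear valuations); your remark that the last step needs $g_{\max(-i)}\ge 0$ (unless $\sum_{j\neq i}y^{-i}_j=1$) usefully makes explicit a point the paper glosses over, but the first-order condition only makes one particular supergradient positive, so for an arbitrary $g_j\in\partial v_j(x_j)$ you should justify it by monotonicity.
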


	\begin{proof}
	\textit{Bounding VCG Payments.}
	Let $\vec{x}^* = (x_1^*, \dots, x_n^*)$ be the optimal allocation produced by VCG. The VCG payment for each player $i$ is:
	\begin{equation}\label{eq:vcg_payment}
		\RVCG^i = \sum_{j \neq i} v_j(y_j^{-i}) - \sum_{j \neq i} v_j(x_j^*)
	= \sum_{j \neq i} \Big[ v_j(y_j^{-i}) - v_j(x_j) \Big] + \sum_{j \neq i} \Big[ v_j(x_j) - v_j(x_j^*)\Big]\,,
	\end{equation}
	where the second equality is obtained by adding and subtracting the valuation $\sum_{j \neq i} v_j(x_j)$ of the remaining players at equilibrium. 
		
	\smallskip\noindent\textit{Bounding the First Term by Concavity.}
	Due to concavity of valuation functions $v_j$, for any subgradient $g_j \in \partial v_j(x_j)$ and for any point $y$, $v_j(y) - v_j(x_j) \le g_j(y - x_j)$. Applying this to the first term of our payment equation (with $y = y_j^{-i}$ for each $j \neq i$), we obtain that: 
	\[ \sum_{j \neq i} \Big[ v_j(y_j^{-i}) - v_j(x_j) \Big] \le \sum_{j \neq i} g_j\big(y_j^{-i} - x_j\big) \]
	Substituting this back into \eqref{eq:vcg_payment} yields the following upper bound on the individual payment:
	\[ \RVCG^i \le \sum_{j \neq i} g_j\big(y_j^{-i} - x_j\big) + \sum_{j \neq i} \Big[ v_j(x_j) - v_j(x_j^*) \Big] \]
	
	Summing up the inequality above over all players $i = 1, \ldots, n$, we obtain that: 
	\[ \RVCG = \sum_{i=1}^n \RVCG^i \le \sum_{i=1}^n \sum_{j \neq i} g_j\big(y_j^{-i} - x_j\big) + \sum_{i=1}^n \sum_{j \neq i} \Big[ v_j(x_j) - v_j(x_j^*) \Big] \]
	
	\smallskip\noindent\textit{Bounding the Second Term by Optimality.}
	We next show that the second term of the above upper bound is always non-positive, due to the optimality of the allocation $\vec{x}^*$ produced by VCG. We note that each player $j$'s valuation term, $v_j(x_j) - v_j(x_j^*)$, appears exactly $n-1$ times in the second double sum. Therefore, we can rewrite the second double sum as:
	\[ \sum_{i=1}^n \sum_{j \neq i} \Big[ v_j(x_j) - v_j(x_j^*) \Big] = 
	   (n-1) \sum_{j=1}^n \Big[ v_j(x_j) - v_j(x_j^*) \Big] =
	   (n-1) \left( \sum_{j=1}^n v_j(x_j) - \sum_{j=1}^n v_j(x_j^*) \right)
	\]
	By definition, $\vec{x}^*$ is the optimal allocation. Hence, the above sum is always non-positive, which concludes the proof of the theorem's first inequality.
	
	\smallskip\noindent\textit{Proving the Second Inequality.}
	To show the first inequality of the theorem, we observe that 
	$$ \sum_{j \neq i} g_j y_j^{-i} \le g_{\max(-i)}\,,$$
	because $g_j \leq g_{\max(-i)} = \max_{j \neq i} \{ g_j \}$ and because $\vec{y}^{-i}$ is a feasible allocation, hence $\sum_{j\neq i} y^{-i}_j \leq 1$.
	Moreover, we observe that each term $g_j x_j$ appears exactly $n-1$ times in the double sum $\sum_{i=1}^n \sum_{j\neq i} g_j x_j$\,. Hence the double sum is simplified to $(n-1)\sum_{i=1}^n g_i x_i$\,. 
	\end{proof}
		
	\begin{corollary}\label{cor:revenue-general}
	Let $\vec{x} = (x_1, \ldots, x_n)$ be the equilibrium allocation of the $\alpha$-proportional mechanism, where the players are indexed so that $x_1 \geq x_2	\geq \cdots \geq x_n$. Then, 
	\[ \RVCG \leq \Rev\left[ 1+ \alpha \frac{x_2}{1-x_2} + \alpha(n-1) \sum_{i=1}^n\left(\frac{x_1}{1-x_1}-\frac{x_i}{1-x_i}\right)x_i
	\right] \]
	\end{corollary}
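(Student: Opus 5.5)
The plan is to start from the second inequality of Theorem~\ref{thm:vcg_bound},
\[
\RVCG \le \sum_{i=1}^n g_{\max(-i)} - (n-1)\sum_{i=1}^n g_i x_i ,
\]
and to evaluate both terms using the equilibrium first-order condition. By Corollary~\ref{l:chain} and the proof of Theorem~\ref{t:unique}, at a Nash equilibrium with $S=\sum_j s_j$ there is a supergradient $g_i\in\partial v_i(x_i)$ for each agent with $x_i>0$ such that
\[
g_i=S^\alpha\,\frac{1-(1-\alpha)x_i}{1-x_i}=\Rev\Bigl(1+\alpha\,\frac{x_i}{1-x_i}\Bigr).
\]
This uses $\Rev=\lambda=S^\alpha$ from Lemma~\ref{l:2}. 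Writing $h(x)=x/(1-x)$, we therefore have $g_i=\Rev\,(1+\alpha h(x_i))$ whenever $x_i>0$.

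For agents with $x_i=0$, the equilibrium condition gives $\sup\partial v_i(0)\le S^\alpha=\Rev$. Since Theorem~\ref{thm:vcg_bound} holds for any choice of supergradients, I will choose $g_i\le \Rev=\Rev(1+\alpha h(0))$ for these agents. I will also treat the degenerate case $x_1=1$ separately: there $h(x_1)=\infty$ and the bound is trivial.

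The key structural observation is that $h$ is increasing on $[0,1)$. Hence, with the indexing $x_1\ge x_2\ge\cdots\ge x_n$, the supergradients are ordered as $g_1\ge g_2\ge g_i$ for all $i\ge 2$. This holds for zero-allocation agents as well, because their $g_i\le\Rev\le g_1$. Consequently $g_{\max(-1)}=g_2$ (or at most $\Rev(1+\alpha h(x_2))$ when $x_2=0$), and $g_{\max(-i)}=g_1$ for every $i\neq 1$. Therefore
\[
\sum_{i=1}^n g_{\max(-i)}\le g_2+(n-1)g_1 .
\]

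Next I use $\sum_i x_i=1$, which holds because the mechanism allocates the whole good whenever $S>0$. This lets me rewrite
\[
(n-1)g_1-(n-1)\sum_i g_ix_i=(n-1)\sum_i (g_1-g_i)\,x_i .
\]
Substituting the closed forms gives $g_1-g_i=\Rev\,\alpha\,(h(x_1)-h(x_i))$ for agents with $x_i>0$. Terms with $x_i=0$ contribute nothing, since they are multiplied by $x_i$. Similarly, $g_2=\Rev(1+\alpha h(x_2))$. Adding these pieces yields exactly
\[
\Rev\Bigl[1+\alpha h(x_2)+\alpha(n-1)\sum_i (h(x_1)-h(x_i))\,x_i\Bigr].
\]

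The main obstacle is bookkeeping rather than analysis. First, I must justify that the equilibrium first-order condition pins down some valid supergradient for each agent, given that $u_i$ need not be differentiable or concave. This follows from Corollary~\ref{l:chain} together with optimality of $s_i$, as in the proof of Theorem~\ref{t:PoA}. Second, I must handle agents with zero allocation carefully so that the monotonicity argument identifying $g_{\max(-i)}$ remains valid.
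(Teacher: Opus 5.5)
Your proposal is correct and follows the paper's proof. Both start from the second bound of Theorem~\ref{thm:vcg_bound}, use the equilibrium identity $g_i=\Rev\bigl(1+\alpha\frac{x_i}{1-x_i}\bigr)$ and monotonicity of $x\mapsto x/(1-x)$ to get $\sum_i g_{\max(-i)}=g_2+(n-1)g_1$, then use $\sum_i x_i=1$ and substitute. Your extra care with zero-allocation agents goes beyond the paper, with two small fixes: at the endpoint $0$ the superdifferential is unbounded above, so the equilibrium condition actually bounds the right derivative $v_i'(0^+)$ by $\Rev$ (and you may simply take $g_i=\Rev$); and the case $x_1=1$ is not ``trivial'' but impossible at equilibrium, since agent $1$ could lower its bid and keep the whole good at a smaller payment.
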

	
	\begin{proof}
	We recall that the revenue $\Rev$ of the $\alpha$-proportional mechanism is the equilibrium price $\lambda$ and that the equilibrium condition for any player $i$ is:
	\[ g_i = \lambda \left( 1 + \alpha \frac{x_i}{1-x_i} \right)\,, \]
	where $g_i \in \partial v_i(x_i)$ is a subgradient of player $i$ at equilibrium. 
	
	By the equilibrium conditions above, there are subgradients $g_i \in \partial v_i(x_i)$ so that $g_1 \geq g_2 \geq \cdots \geq g_n$. Then, the first sum in the second bound of Theorem~\ref{thm:vcg_bound} can be written as:
	\[ \sum_{i=1}^n g_{\max(-i)} = g_2+(n-1)g_1 = g_2+(n-1)\sum_{i=1}^n g_1 x_i\,,
	\]
	using $x_1 + \cdots+x_n = 1$. Therefore, the second bound of Theorem~\ref{thm:vcg_bound} can be written as:
	\begin{equation}\label{eq:vcg-revenue1}
		\RVCG \leq \sum_{i=1}^n g_{\max(-i)} - (n-1)\sum_{i=1}^n g_i x_i =
		g_2 + (n-1)\sum_{i=1}^n (g_1 - g_i)x_i\,.	
	\end{equation}
	Substituting the equilibrium conditions in \eqref{eq:vcg-revenue1} concludes the proof of the corollary. 
	\end{proof}

\theoremseven*
\begin{proof}
In the case of $n$ agents with an identical valuation function, the Nash Equilibrium is symmetric, meaning that each agent $i \in [n]$ admits $x_i = 1/n$. Then, the proof follows by applying the latter to Corollary~\ref{cor:revenue-general}.
\end{proof}

\theoremeight*
	\begin{proof}
    Let $\vec{x} = (x_1, x_2)$ be the equilibrium allocation of $\alpha$-proportional, with $x_1+x_2=1$. Applying the bound of Corollary~\ref{cor:revenue-general} for $n=2$ and using that $x_1+x_2 =1$, we obtain that:
	\begin{equation}\label{eq:two-players}
	\RVCG \leq \Rev \left[ 1+\alpha\left(\frac{x_2}{x_1} + \left(\frac{x_1}{x_2} - \frac{x_2}{x_1}\right) x_2 \right)\right]
	\end{equation}
    We observe that the multiplier of $\alpha$ in the parentheses in \eqref{eq:two-players} is: 
    \[ \frac{x_2}{x_1} + \left(\frac{x_1}{x_2} - \frac{x_2}{x_1}\right) x_2 = x_1 + (1-x_2)\frac{x_2}{x_1} = x_1 + x_2 = 1\,,
    \]
    which concludes the proof of the corollary. 
	\end{proof}

\end{document}